\newtheorem{theorem}{Theorem}
\newdefinition{remark}[theorem]{Remark}
\newproof{proof}{\textit{Proof}}
\newdefinition{definition}[theorem]{Definition}
\newdefinition{example}[theorem]{Example}
\newclass{\HOCA}{HOCA}
\newcommand{\pnuca}{\ensuremath{\text{p-}\nu\text{CA}}\xspace}
\newcommand{\N}{\ensuremath{\mathbb{N}}\xspace}
\newcommand{\set}[1]{\{#1\}}
\newcommand{\structure}[1]{\ensuremath{\left\langle#1\right\rangle}}
\newcommand{\myvec}[1]{\ensuremath{\boldsymbol{#1}}\xspace}
\renewcommand{\v}{\myvec{v}}
\newcommand{\x}{\myvec{x}}
\newcommand{\ccc}{\myvec{c}}
\newcommand{\pp}{\mathbbm{p}}
\newcommand{\mm}{\mathbbm{m}}
\newcommand{\uu}{\mathbbm{u}}
\renewcommand{\ll}{\mathbbm{l}}
\renewcommand{\M}{\myvec{M}}
\newcommand{\NN}{\myvec{N}}
\newcommand{\LL}{\myvec{L}}
\newcommand{\U}{\myvec{U}}
\renewcommand{\P}{\myvec{P}}
\newcommand{\mat}[2]{Mat\left(#1,#2\right)}
\newcommand{\Z}{\mathbb{Z}}
\newcommand{\LP}{\ensuremath{Z\left[X,X^{-1}\right]}\xspace}
\newcommand{\LPZ}{\ensuremath{\Z\left[X,X^{-1}\right]}\xspace}
\newcommand{\LPm}{\ensuremath{\Z_m\left[X,X^{-1}\right]}\xspace}
\newcommand{\LPmn}{\ensuremath{\Z_m^n\left[X,X^{-1}\right]}\xspace}
\newcommand{\LPK}{\ensuremath{\Z_{p^k}\left[X,X^{-1}\right]}\xspace}
\newcommand{\CS}{\ensuremath{\Z^{\Z_m^{k+1}}}\xspace}
\newcommand{\ie}{i.e.\@\xspace}
\newcommand{\modulo}[2]{\left[ #1\right]_{#2}}
\newcommand{\z}{\mathbb{Z}}
\newcommand{\n}{\mathbb{N}}
\newcommand{\az}{S^{\mathbb{Z}}}
\newtheorem{lemma}[theorem]{Lemma}
\newtheorem{proposition}[theorem]{Proposition}
\newtheorem{teo}{Theorem}[section]
\newtheorem{notation}[teo]{Notation}
\def \� {\`e }
\def \� {\`a }
\def \� {\`i }
\def \� {\`u }
\def \� {\`o }
\newcounter{exnum}\setcounter{exnum}{0}
\newenvironment{ex}{\refstepcounter{exnum}%
                         \begin{trivlist}
                           \item[\hskip\labelsep{\bf Example
                                                     \arabic{exnum}}]}%
                        {\hfill\hbox{$\quad\Box$}\end{trivlist}}
\begin{document}

\title{On the dynamical behaviour of linear higher-order \\
cellular automata and its decidability}
\author[unimib]{Alberto Dennunzio}
\ead{dennunzio@disco.unimib.it}

\author[unice]{Enrico Formenti}
\ead{enrico.formenti@unice.fr}

\author[unimib]{Luca Manzoni}
\ead{luca.manzoni@disco.unimib.it}

\author[unibo]{Luciano Margara}
\ead{luciano.margara@unibo.it}

\author[unimib,amu]{Antonio E. Porreca}
\ead{antonio.porreca@lis-lab.fr}

\address[unimib]{Dipartimento di Informatica, Sistemistica e Comunicazione,
  Università degli Studi di Milano-Bicocca,
  Viale Sarca 336/14, 20126 Milano, Italy}
  
\address[unice]{Universit\'e C\^ote d'Azur, CNRS, I3S, France}

\address[unibo]{Department of Computer Science and Engineering, University of Bologna, Cesena Campus, Via Sacchi 3, Cesena, Italy}

\address[amu]{Aix Marseille Université, Université de Toulon, CNRS, LIS, Marseille, France}

\begin{abstract}
Higher-order cellular automata (HOCA) are a variant of cellular automata (CA) used in many applications (ranging, for instance, from the design of secret sharing schemes to data compression and image processing), and in which the global state of the system at time $t$ depends not only on the state at time $t-1$, as in the original model, but also on the states at time $t-2$, \ldots, $t-n$, where $n$ is the memory size of the HOCA. We provide decidable characterizations of two important dynamical properties, namely, sensitivity to the initial conditions and equicontinuity, for linear HOCA over the alphabet $\Z_m$.
Such characterizations extend the ones shown in~\cite{ManziniM99} for linear CA (LCA) over the alphabet $\Z^{n}_m$ in the case $n=1$.  We also prove that linear HOCA of size memory $n$ over $\Z_m$ form a class that is indistinguishable from a specific subclass of LCA over $\Z_m^n$. This enables to decide injectivity and surjectivity for linear HOCA of size memory $n$ over $\Z_m$ using the decidable characterization provided in~\cite{bruyn1991} and~\cite{kari2000} for injectivity and surjectivity of LCA over $\Z^n_m$.  Finally,  we prove an equivalence between
LCA over $\Z_m^n$ and an important class of non-uniform CA, another variant of CA  used in many applications. 
\end{abstract}

\begin{keyword}
 cellular automata \sep higher-order cellular automata \sep linear cellular automata \sep sensitivity to the initial conditions \sep decidability \sep discrete dynamical systems
\end{keyword}

\maketitle

\section{Introduction}
Cellular automata (CA) is well-known formal model which has been successfully applied in
a wide number of fields to simulate complex phenomena involving local, uniform, and synchronous processing (for recent results and an up-to date bibliography on CA, see~\cite{DennunzioFW14, DennunzioGM08, AcerbiDF07, DennunzioLFM13}). 
More formally, a CA is made of an infinite set of identical finite automata arranged over a regular cell grid (usually $\Z^d$ in dimension $d$) and all taking a state from a finite set $S$ called the \emph{set of states} or the \emph{alphabet} of the CA. In this paper, we consider one-dimensional CA. A \emph{configuration} is a snapshot  of all states of the automata, \ie, a function $c:\Z\to S$. A \emph{local rule} updates the state of each automaton on the basis of its current state and the ones of a finite set of neighboring automata. All automata are updated synchronously. 
In the one-dimensional settings, a CA over (the alphabet) $S$ is a structure $\structure{S, r, f}$ where $r\in\N$ is the \emph{radius} and  $f:S^{2r+1}\to S$ is the local rule which updates, for each $i\in\Z$, the state of the automaton in the position $i$ of the grid $\Z$ on the basis of states of the automata in the positions $i-r, \ldots, i+r$. A configuration is an element of $S^{\Z}$ and describes the (global) state of the CA. The feature of synchronous updating induces the following \emph{global rule} $F:\az\to\az$ defined as 
\[
\forall c\in \az, \forall i\in\Z,\qquad F(c)_i=f(c_{i-r}, \ldots c_{i+r})\enspace.
\]
As such, the global map $F$ describes the change from any configuration $c$ at any time $t\in\N$ to the configuration $F(c)$ at $t+1$ and summarises the main features of the CA model, namely, the fact that  it is defined through a local rule which is applied uniformly and synchronously to all cells.

Because of a possible inadequacy, in some contexts, of every single one of the three defining features, 
variants of the original CA 
model started appearing, each one relaxing one among these three features.
Asynchronous CA relax synchrony (see~\cite{INGERSON198459,SCHONFISCH1999123, DennunzioFMM13} for instance), non-uniform
CA  relax uniformity~(\cite{DENNUNZIO201473,DENNUNZIO201232}), while hormone CA (for instance)
relax locality~\cite{cervelle:hal-01615278}. However, from the mathematical point of view all those systems, as well
as the original model, fall in the same class, namely, the class of autonomous discrete dynamical systems (DDS) 
and one could also precise \emph{memoryless} systems. Indeed, the latter compute their
next global state just on the basis of their current state, while the past ones play no active role.
Allowing the original model to take into account past states leads to a further natural variant which can further extend the application range of the model itself.

As a motivating example, consider the following classical network routing problem. Assume to have two packet sources
$A$ and $B$ which are connected to a common router $R$ which in
its turn is connected to two receiving hosts $O_1$ and $O_2$ as
illustrated below.
\begin{center}
\begin{tikzpicture}[
mynode/.style={draw,circle,minimum width=.8cm}
]
\node[mynode](A) at (0,0) {$A$};
\node[mynode](B) at (2,0) {$B$};
\node[mynode](R) at (1,-1) {$R$};
\node[mynode](O1) at (0,-2) {$O_1$};
\node[mynode](O2) at (2,-2) {$O_2$};
\draw[->] (A) -- (R);
\draw[->] (B) -- (R);
\draw[->] (R) -- (O1);
\draw[->] (R) -- (O2);
\end{tikzpicture}
\end{center}
If $R$ receives a packet $m_A$ from $A$ but none from $B$, then it sends $m_A$ to the output hosts $O_1$ and $O_2$; if both $A$ and $B$ send a packet, $m_A$ and $m_B$,
respectively, then $m_B$ is enqueued and $m_A$ is transmitted
to $O_1$ and $O_2$. Of course, the more frequently simultaneous packets from $A$ and $B$ arrive at $R$, the longer the queue
has to be in order to avoid packet loss. When a whole network is considered, this routing problem can
be easily solved using a variant of CA in which the state of each node keeps
track of the current state of the router and the past
states which represent the received but not yet transmitted
packets.

As to the possible variants of the original CA model, in
~\cite{toffoli1977}, Toffoli introduced \emph{higher-order CA}  (HOCA), \ie, variants of CA in which the updating of the state of a cell also depends on the past states of the cell itself and its neighbours. In particular, he showed that any arbitrary reversible \emph{linear} HOCA can be embedded in a  reversible \emph{linear CA} (LCA), where linear means that the local rule is linear. Essentially, the trick consisted in memorizing past states and recover them later on. Some years later, Le Bruyn and Van Den Bergh explained and
generalized the Toffoli's construction and proved that any linear
HOCA having the ring $S=\Z_m$ as alphabet and memory size $n$ can be simulated by a LCA over the alphabet $\Z^n_m$ (see the precise definition in Section~\ref{sec:def})~\cite{bruyn1991}.
In this way a practical way to decide injectivity (which is equivalent to reversibility in this setting) and, as we will see in Section~\ref{sec:def},  surjectivity of HOCA can be easily derived by the characterization of the these properties for the corresponding LCA simulating them. Indeed, in~\cite{bruyn1991} and~\cite{kari2000}, characterizations of injectivity and surjectivity of a LCA over $\Z^n_m$ are provided in terms of properties of the determinant of the matrix associated with it, where the determinant turns out to be an other LCA (over $\Z_m$ and then a LCA simpler than that over $\Z^n_m$). Since the properties of LCA over $\Z_m$ (\ie, LCA over $\Z^n_m$ with $n=1$) have been extensively studied and related decidable characterizations  have been obtained~\cite{ManziniM99a, CattaneoDM04, DamicoMM03, CattaneoFMM00}, one derives the algorithms to decide injectivity and surjectivity for LCA over $\Z^n_m$ and, then, as we will see in Section~\ref{sec:def}, also for HOCA over $\Z_m$ of memory size $n$, by means of the associated matrix.    

Applications of HOCA (in particular the linear ones) cover a wide span of topics, ranging from
the design of secret sharing schemes~\cite{MARTINDELREY20051356,chai2005}
to data compression and image processing~\cite{Gu2000}. 
Remark that (linear) HOCA are often required to exhibit a chaotic behavior in order they can be used in applications, as for instance in those above mentioned. The purpose of the present paper is to study, in the context of linear HOCA,  sensitivity to the initial conditions and equicontinuity, where the former is the well-known basic component and essence of the chaotic behavior of a DDS, while the latter represents a strong form of stability. To do that, we put in evidence that any linear HOCA of memory size $n$ over $\Z_m$  is not only simulated by, but also topologically conjugated to a LCA over $\Z^n_m$ defined by a matrix having a specific form. Thus, in order to decide injectivity and surjectivity for linear HOCA  of memory size $n$ over $\Z_m$, by means of that specific matrix one can use the decidable characterization provided in~\cite{bruyn1991} and~\cite{kari2000} for deciding the same properties for LCA over $\Z^n_m$ . As main result, we prove that sensitivity to the initial conditions and equicontinuity are decidable properties for linear HOCA of memory size $n$ over $\Z_m$(Theorem~\ref{main}). In particular we provide a decidable characterization of those properties, in terms of the matrix associated with a linear HOCA. Remark that if $n=1$, starting from our characterizations, one recover exactly the well known ones of sensitivity and equicontinuity for LCA over $\Z_m$. Finally,  we prove an equivalence between
LCA over $\Z_m^n$ and an important class of linear non-uniform cellular automata. This result gives strong motivations to further study LCA over $\Z_m^n$ in the next future. First of all, non-uniform cellular automata is indeed another variant of cellular automata which is used in many applications (in particular, the linear ones). For instance, as pointed out in~\cite{kari2000}, linear non-uniform cellular automata can be used as subband encoders for compressing signals and images~\cite{Shapiro93}. Moreover, little is known for linear non-uniform cellular automata. %
%
%
%
%
\section{Higher-Order CA and Linear CA}
\label{sec:def}
We begin by reviewing some general notions and introducing notations we will use throughout 
the paper.
\smallskip

\noindent
A \emph{discrete dynamical system} (DDS) is a pair $(\mathcal{X}, \mathcal{F})$ where 
$\mathcal{X}$ is a space equipped with a metric, \ie, a metric space, and $\mathcal{F}$ is a transformation on $\mathcal{X}$ which is continuous with respect to that metric. 
The \emph{dynamical evolution} of a DDS $(\mathcal{X}, \mathcal{F})$  starting from the initial state $x^{(0)}\in \mathcal{X}$ is the sequence $\{x^{(t)}\}_{t\in\N}\subseteq \mathcal{X}$ where $x^{(t)}=\mathcal{F}^t(x^{(0)})$ for any $t\in\N$. 

When $\mathcal{X}=\az$ for some set finite $S$, $\mathcal{X}$ is usually equipped with the metric $d$ defined as follows
\[
\forall c,c'\in\az, \quad d(c,c')=\frac{1}{2^n}\;\quad \text{where}\;n=\min\{i\geq 0\,:\,c_i\ne
c'_i \;\text{or}\;c'_{-i}\ne c'_{-i}\}\enspace.
\]
Recall that  $\az$ is a compact, totally
disconnected and perfect topological space (\ie, $\az$ is a Cantor
space).

Any CA $\structure{S, r, f}$ defines the DDS $(\az, F)$, where $F$ is the CA global rule (which is continuous by Hedlund's Theorem~\cite{hedlund69}). From now on, for the sake of simplicity, we will sometimes identify a CA with its global rule $F$ or with the DDS $(\az, F)$. 

Recall that two DDS $(\mathcal{X},\mathcal{F})$ and $(\mathcal{X}',\mathcal{F}')$ are 
\emph{topologically conjugated} if
there exists a homeomorphism $\phi: \mathcal{X} \mapsto \mathcal{X}'$ such that $\mathcal{F}'\circ \phi=\phi\circ \mathcal{F}$, while the \emph{product} of $(\mathcal{X}, \mathcal{F})$ and $(\mathcal{X}', \mathcal{F}')$ is the DDS $(\mathcal{X}\times \mathcal{X}', \mathcal{F}\times \mathcal{F}')$ where $\mathcal{F}\times \mathcal{F}'$ is defined as $\forall (x,x')\in \mathcal{X} \times \mathcal{X}'$, $(\mathcal{F}\times \mathcal{F}')(x,x')=(\mathcal{F}(x), \mathcal{F}'(x'))$ and the space $\mathcal{X}\times \mathcal{X}'$ is as usual endowed with the infinite distance. 

\begin{notation}
For all $i,j\in\z$ with $i\leq j$, we write $[i,j]=\{i,i+1,\ldots,j\}$ 
to denote the interval of integers between $i$ and $j$.  For any $n\in\N$ and any set $Z$  the set of all $n\times n$ matrices with coefficients in $Z$ and the set of Laurent polynomials with coefficients in $Z$ will be noted by $\mat{n}{Z}$ and $\LP$, respectively. In the sequel, bold symbols are used to denote vectors, matrices, and configurations over a set of states which is a vectorial space. Moreover, 
$m$ will be an integer bigger than $1$ and $\Z_m=\{0, 1, \ldots, m-1\}$  the ring with the usual sum and product modulo $m$. For any $\x\in\Z^n$ (resp., any matrix $\M(X)\in\mat{n}{\LPZ}$), we will denote by $\modulo{\x}{m}\in\Z_m^n$ (resp., $\modulo{\M(X)}{m}$),  the vector (resp., the matrix) in which each component $x^i$ of $\x$ (resp., every coefficient of each element of $\M(X)$) is taken modulo $m$. Finally, for any matrix $\M(X)\in\LPm$ and any $t\in\N$, the $t$-th power of $\M(X)$ will be noted more simply by $\M^t(X)$ instead of $(\M(X))^t$.
\end{notation}
\begin{definition}[Higher-Order Cellular Automata]
A 
\emph{Higher-Order Cellular Automata (HOCA)} is a structure
$\mathcal{H}=\structure{k,S,r,h}$ where $k\in\N$ with $k\geq 1$ is the \emph{memory} size, $S$ is the \emph{alphabet}, $r\in\N$ is the \emph{radius}, and $h\colon S^{(2r+1)k}\to S$ is the \emph{local rule}. Any HOCA $\mathcal{H}$ induces the \emph{global rule} $H:\left({\az}\right)^k\to\left({\az}\right)^k$ associating any vector $\myvec{e}=(e^1, \ldots, e^k)\in \left({\az}\right)^k$ of $k$ configurations of $\az$ with the vector $H(\myvec{e})\in\left({\az}\right)^k$ such that $H(\myvec{e})^j=e^{j+1}$ for each $j\neq k$ and
\[
\quad \forall i\in\Z,\quad H(\myvec{e})^k_i=h\left(\begin{matrix} e^{1}_{[i-r, i+r]}\\ \\ e^{2}_{[i-r, i+r]}\\ \vdots\\ e^{k}_{[i-r, i+r]}\end{matrix}\right)
\]
In this way, $\mathcal{H}$ defines the DDS $\left(\left({\az}\right)^k, H\right )$. As for CA, we sometimes identify a HOCA with its global rule or the DDS defined by it. Moreover, we will often refer to a HOCA over $S$ to stress the alphabet over which the HOCA is defined. 
\end{definition}
\begin{remark}
It is easy to check that for any HOCA $\mathcal{H}=\structure{k,S,r,h}$ there exists a CA $\structure{S^k,r,f}$ which is topologically conjugated to $\mathcal{H}$.
\end{remark}

The study of the dynamical behaviour of HOCA is still at its early stages; a few results are known for the class of \emph{linear HOCA}, namely, those HOCA defined by a local rule $f$  which is \emph{linear}, \ie,  $S$ is 
$\Z_m$ 
and there exist coefficients $a^j_i\in\Z_m$  ($j=1, \ldots, k$ and $i=-r, \ldots, r$) such that for any element
\[
\x=\left(\begin{matrix} x^{1}_{-r} & \hdots &  x^{1}_{r} \\ \\ x^{2}_{-r} & \hdots &  x^{2}_{r}\\ \vdots\\ x^{k}_{-r} & \hdots &  x^{k}_{r}\end{matrix}\right)\in \Z_m^{(2r+1)k}, \quad f(\x)=\modulo{\sum_{j=1}^{k}\sum_{i=-r}^{r} a^j_i x^{j}_{i}}{m}\enspace .
\]  
It is easy to see that linear HOCA are additive, \ie, 
\[
\forall\myvec{c},\myvec{d}\in \left(\Z_m^{\Z}\right)^k, \quad  H(\myvec{c}+\myvec{d})=
H(\myvec{c})+H(\myvec{d})
\]
where, with the usual abuse of notation, $+$ denotes the natural extension of the sum over $\Z_m$ to both $\Z_m^{\Z}$ and $\left(\Z_m^{\Z}\right)^k$.

In~\cite{bruyn1991}, a much more convenient representation is introduced for the case of linear HOCA (in dimension $d=1$) by means of the following notion.

\begin{definition}[Linear Cellular Automata]
A \emph{Linear Cellular Automaton} (LCA) over the alphabet $\Z_m^n$ is a CA $\mathcal{L}=\structure{\Z_m^n, r, f}$ where the local rule $f\colon(\Z_m^n)^{2r+1}\to\Z_m^n$ is defined by $2r+1$ matrixes  
$\M_{-r},\ldots, \M_0,\ldots, \M_r\in\mat{n}{\Z_m}$ 
as follows: 
$f(\x_{-r}, \ldots,\x_0,\ldots,\x_r) = 
\modulo{\sum_{i=-r}^r \M_i\cdot\x_i}{m}$ for any 
$(\x_{-r}, \ldots,\x_0,\ldots,\x_r)\in(\Z_m^n)^{2r+1}$.
\end{definition}
\begin{remark}
LCA over $\Z_m^n$ have been strongly investigated in the case $n=1$ and all the dynamical properties have been characterized in terms of the $1\times 1$ matrices (\ie, coefficients) defining the local rule, in any dimension too~\cite{ManziniM99,CattaneoDM04}.  
\end{remark}

We recall that any linear HOCA $\mathcal{H}$ can be simulated by a suitable LCA, as shown in~\cite{bruyn1991}. Precisely, given a linear HOCA $\mathcal{H}=\structure{k,\Z_m,r,h}$, where $h$ is defined by the coefficients $a^j_i\in\Z_m$, the LCA simulating $\mathcal{H}$ is $\mathcal{L}=\structure{\Z_m^k, r, f}$  with $f$ defined by following matrices 
\begin{align}
\label{M0}
\M_0=
\begin{bmatrix}
0&1&0& \dots&0&0\\
0&0& 1&\ddots&0&0\\
0&0&0 &\ddots&0&0\\
\vdots&\vdots& \vdots&\ddots&\ddots&\vdots\\
0        &   0 &0   & \dots &0&1\\
a_0^1&a_0^2& a_0^3 &\dots &a_0^{k-1}&a_0^k
\end{bmatrix}\enspace,
\end{align}
and, for $i\in [-r, r]$ with $i\ne0$,
\begin{align}
\label{Mi}
\M_i=
\begin{bmatrix}
0&0&0&\dots&0&0\\
0&0&0&\dots&0&0\\
0&0&0&\dots&0&0\\
\vdots&\vdots&\vdots& \ddots&\vdots&\vdots\\
0        &   0    & 0 &  \dots &0&0\\
a_i^1&a_i^2& a_i^3& \dots &a_i^{k-1}&a_i^k
\end{bmatrix}\enspace.
\end{align}
\begin{remark}
\label{rem:classes}
We want to put in evidence that a stronger result actually holds (easy proof, important remark): any linear HOCA $\mathcal{H}$ is topologically conjugated to the LCA $\mathcal{L}$ defined by the matrices in~\eqref{M0} and~\eqref{Mi}. 
Clearly, the converse also holds: for any LCA 
defined by the matrices in~\eqref{M0} and~\eqref{Mi} 
there exists a linear HOCA which is topologically conjugated to it. In other words, up to a homeomorphism the whole class of linear HOCA is identical to the subclass of LCA defined by the matrices above introduced.
In the sequel, we will call $\mathcal{L}$ the \emph{matrix presentation} of $\mathcal{H}$.  
\end{remark}

We are now going to show a stronger and useful new fact, namely, that the class of linear HOCA is nothing but the subclass of LCA represented by a formal power series which is a matrix in Frobenius normal form. Before proceeding, let us recall the  \emph{formal power series} (fps)
\noindent
which have been successfully used to study the dynamical 
behaviour of LCA in the case $n=1$~\cite{ito83,ManziniM99,ManziniM99a,CattaneoFMM00}. The idea of this formalism is that configurations and global rules are represented by
suitable polynomials and the application of the global rule turns into multiplications of
polynomials. 
In the more general case of LCA over $\Z_m^n$,  a configuration $\ccc\in(\Z_m^n)^\Z$ can be associated with the fps
\[
\P_{\ccc}(X)=\sum_{i\in\z}\ccc_i X^i = 
\begin{bmatrix}
c^1(X)\\
\vdots\\
c^n(X)
\end{bmatrix}
=
\begin{bmatrix}
\sum_{i\in\Z} c_i^1 X^i\\
\vdots\\
\sum_{i\in\Z} c_i^n X^i
\end{bmatrix}\in\left(\LPm\right)^n\cong \LPmn \enspace.
\]
\noindent
Then, if 
$F$ is the global rule of a LCA defined by $\M_{-r},\ldots, \M_0,\ldots, \M_r$, one finds 
\[
\P_{F(\ccc)}(X)=\modulo{\M(X) \P_{\ccc}(X)}{m}
\]
where 
$$
\M(X) =\modulo{ \sum\limits_{i=-r}^{r} \M_i X^{-i}}{m}
$$
is the \emph{finite fps}, or, \emph{the matrix}, \emph{associated with the LCA $F$}. In this way, for any integer $t>0$ the fps associated with $F^t$ is $\M(X)^t$, and then 
$
\P_{F^t(\ccc)}(X)=\modulo{\M(X)^t \P_{\ccc}(X)}{m}\enspace.
$ 
Roughly speaking, the action of a LCA over a configuration is given by multiplication between elements of $\mat{n}{\LPm}$ with elements of $\left(\LPm\right)^n$. Throughout this paper, $\M(X)^t$ will refer to $\modulo{\M(X)^t}{m}$. 


A matrix $\M(X)\in\mat{n}{\LP}$ is in \emph{Frobenius normal form} if  
\begin{eqnarray}
\label{FNF}
\M(X)=
\begin{bmatrix}
0&1&0& \dots&0&0\\
0&0& 1&\ddots&0&0\\
0&0&0 &\ddots&0&0\\
\vdots&\vdots& \vdots&\ddots&\ddots&\vdots\\
\\
0        &   0 &0   & \dots &0&1\\
\\
\mm_0(X)&\mm_1(X)& \mm_2(X) &\dots &\mm_{n-2}(X)&\mm_{n-1}(X)
\end{bmatrix}
\end{eqnarray}
where each $\mm_i(X)\in\LP$

From now on, for a  given matrix $\M(X)\in\mat{n}{\LP}$ in Frobenius normal form, $\myvec{\mm}(X)$ 
will always make reference to its $n$-th row. 

\begin{definition}[Frobenius LCA]
  \label{frobeniusLCA}
A LCA $F$ over the alphabet $\Z^n_{m}$ is said to be a \emph{Frobenius LCA} if the fps $\M(X)\in\mat{n}{\LPm}$ associated with $F$ is in Frobenius normal form. 
\end{definition}
It is immediate to see that a LCA is a Frobenius one iff it is defined by the matrices in~\eqref{M0} and~\eqref{Mi}, \ie, iff it is topologically conjugated to a linear HOCA. This fact together with Remark~\ref{rem:classes} and Definition~\ref{frobeniusLCA}, allow us to state the following
\begin{proposition} 
\label{prop:homeo}
Up to a homeomorphism, the class of linear HOCA over $\Z_m$ of memory size $n$ is nothing but the class of Frobenius LCA over $\Z^n_m$.
\end{proposition}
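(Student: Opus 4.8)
The plan is to establish the claim by combining three facts already collected in the excerpt: (i) Remark~\ref{rem:classes}, which says that up to a homeomorphism the class of linear HOCA over $\Z_m$ of memory size $n$ coincides with the subclass of LCA over $\Z_m^n$ defined by the matrices in~\eqref{M0} and~\eqref{Mi}; (ii) Definition~\ref{frobeniusLCA}, which defines a Frobenius LCA as one whose associated fps $\M(X)$ is in Frobenius normal form~\eqref{FNF}; and (iii) the sentence immediately preceding the proposition, asserting that a LCA is a Frobenius one if and only if it is defined by the matrices in~\eqref{M0} and~\eqref{Mi}. Chaining these, the class of Frobenius LCA over $\Z_m^n$ equals (literally, as a set of DDS) the subclass of LCA defined by~\eqref{M0}--\eqref{Mi}, which in turn equals, up to homeomorphism, the class of linear HOCA of memory size $n$; so the proposition follows. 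The only genuine content to be supplied is a justification of fact (iii), since (i) and (ii) are quoted verbatim.

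So the real work is to verify the "iff" between "defined by~\eqref{M0} and~\eqref{Mi}" and "associated fps in Frobenius normal form". First I would recall that the fps associated with a LCA with local matrices $\M_{-r},\dots,\M_r$ is $\M(X)=\modulo{\sum_{i=-r}^r \M_i X^{-i}}{m}$. For the forward direction, I would take the matrices~\eqref{M0}--\eqref{Mi} and compute this sum entrywise: every entry outside the last row is constant in $i$ (it is the fixed super-diagonal pattern of~\eqref{M0}, and $\M_i$ contributes $0$ there for $i\neq 0$), so it contributes the same super-diagonal of $1$'s to $\M(X)$; and the last row of $\M(X)$ has $j$-th entry $\sum_{i=-r}^{r} a_i^j X^{-i}$, which is a Laurent polynomial $\mm_{j-1}(X)\in\LPm$ (after re-indexing $\mm_{j-1}(X)$ corresponds to the $j$-th column entry $\pp_j$ of~\eqref{FNF}). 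Hence $\M(X)$ has exactly the shape~\eqref{FNF}, i.e.\ it is in Frobenius normal form. For the converse, given an $\M(X)$ in Frobenius normal form with last row $(\mm_0(X),\dots,\mm_{n-1}(X))$, write each $\mm_{j-1}(X)=\modulo{\sum_{i} a_i^j X^{-i}}{m}$ (possible since each $\mm_{j-1}$ is a Laurent polynomial with coefficients in $\Z_m$; take $r$ the maximum of the absolute values of exponents occurring), and then the local matrices $\M_i$ recovered from $\M(X)=\sum_i \M_i X^{-i}$ are forced to be exactly~\eqref{M0} for $i=0$ and~\eqref{Mi} for $i\neq 0$, with $a_i^j$ as above; thus the LCA is a Frobenius one defined by matrices of that form, and the tuple $(a_i^j)$ is precisely the data of a linear HOCA $\mathcal{H}=\structure{n,\Z_m,r,h}$ whose matrix presentation it is.

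I expect no serious obstacle: the proof is essentially a bookkeeping of matrix entries and a pairing-up of already-proved statements. The one point requiring a little care is the direction of the indexing and the substitution $X\mapsto X^{-i}$ relating the local matrix $\M_i$ to the coefficient of $X^{-i}$ in $\M(X)$, so that the coefficients $a_i^j$ of the HOCA local rule $h$ match the coefficients of the Laurent polynomials $\mm_{j-1}(X)$ (equivalently $\pp_j(X)$) in the last row of~\eqref{FNF}; once that correspondence is pinned down, the "iff" is immediate, and the proposition then reads off from Remark~\ref{rem:classes} and Definition~\ref{frobeniusLCA}.
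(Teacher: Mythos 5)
Your proposal is correct and follows essentially the same route as the paper, which offers no explicit proof but simply combines Remark~\ref{rem:classes}, Definition~\ref{frobeniusLCA}, and the observation (stated as ``immediate'') that a LCA is a Frobenius one iff it is defined by the matrices in~\eqref{M0} and~\eqref{Mi}. You merely supply the entrywise bookkeeping for that last equivalence, and your computation of $\M(X)=\modulo{\sum_{i=-r}^r \M_i X^{-i}}{m}$ and its converse is accurate.
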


At this point we want to stress that the action of a Frobenius LCA $F$, or, equivalently, a linear HOCA, over a configuration $\ccc\in (\Z_m^n)^{\Z}$ can be viewed as a bi-infinite array of \emph{linear--feedback shift register}. 
Indeed, it holds that
\[
\P_{F(\ccc)}(X) = \M(X) \P_{\ccc}(X) =
\begin{bmatrix}
c^2(X)\\
\vdots\\
\sum_{i=1}^n \mm_{i-1}(X) c^i(X)
\end{bmatrix} \enspace,
\]
where the $n$--th component of the vector $\P_{F(\ccc)}(X)$ is given by the sum of the results of the actions of $n$ one-dimensional LCA over $\Z_m$ each of them applied on a different component of $\P_{\ccc}(X)$, or, in other words, on a different element of the memory of the linear HOCA which is topologically conjugated to $F$.

\begin{remark}
Actually, in literature a matrix is in Frobenius normal form if either it or its transpose has a form as in~\eqref{FNF}. Since any matrix in Frobenius normal form is conjugated to its transpose, any Frobenius LCA $F$ is topologically conjugated to a LCA $G$ such that the fps associated with $G$ is the transpose of the fps associated with $G$. In other words, up to a homeomorphism, such LCA $G$, linear HOCA, and Frobenius LCA form the same class and, in particular, the action of $G$ on any configuration $\ccc\in (\Z_m^n)^{\Z}$ is such that
\[
\P_{G(\ccc)}(X) = \M(X)^T \P_{\ccc}(X) =
\begin{bmatrix}
\mm_0(X) c^n(X)\\
c^1(X) + \mm_1(X)c^n(X) \\
\vdots\\
c^{n-1}(X) + \mm_{n-1}(X)c^n(X)
\end{bmatrix} \enspace,
\]
where $\M(X)^T$ is the transpose of the matrix $\M(X)$ associated to the LCA $F$ which is topologically conjugated to $G$.
%
%
\end{remark}
From now on, we will focus on Frobenius LCA, \ie, matrix presentations of linear HOCA.  Indeed, they allow convenient algebraic manipulations that are very useful to study formal properties of linear HOCA. For example, in~\cite{bruyn1991} and~\cite{kari2000}, the authors proved characterizations for injectivity and surjectivity for LCA in terms of the matrix $\M(X)$ associated to them and which turns out to be decidable  by means of the  characterization of injectivity and surjectivity for LCA over $\Z_m$ shown in~\cite{ito83}.
\begin{proposition}[\cite{bruyn1991, kari2000}]
\label{prop:kari}
Let $\left(\left( \Z_m^n \right)^{\Z}, F\right )$ be a LCA over $\Z^n_m$ and let $\M(X)$ be the matrix associated with $F$. Then, $F$ is injective (resp., surjective) if and only if the determinant of $\M(X)$ is the fps associated with a injective (resp., surjective) LCA over $\Z_m$. 
\end{proposition}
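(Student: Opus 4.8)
The plan is to turn both halves of the statement into purely algebraic facts about the single Laurent-polynomial matrix $\M(X)\in\mat{n}{\LPm}$ associated with $F$. Here $\LPm$ denotes the (commutative) ring of Laurent polynomials over $\Z_m$; under the dictionary recalled just before Proposition~\ref{prop:homeo}, finitely supported configurations in $(\Z_m^n)^{\Z}$ correspond to elements of the free module $\LPmn\cong(\LPm)^n$, and $F$ acts on them by left multiplication by $\M(X)$. I will use the adjugate identity $\operatorname{adj}(\M(X))\,\M(X)=\M(X)\,\operatorname{adj}(\M(X))=\det(\M(X))\cdot I$ — valid over any commutative ring — to push each question down to the single Laurent polynomial $\det(\M(X))$, and then invoke the already known case $n=1$: an LCA over $\Z_m$ with associated fps $q(X)$ is injective iff $q(X)$ is a unit of $\LPm$, and surjective iff $q(X)$ is a non-zero-divisor of $\LPm$ (\cite{ito83}).

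For injectivity, recall that an injective CA over a finite alphabet is automatically bijective with a CA inverse (Hedlund). Since $F$ is additive and commutes with the shift, so is its inverse; hence the inverse is itself an LCA over $\Z_m^n$, with some associated matrix $\NN(X)\in\mat{n}{\LPm}$, and $\NN(X)\M(X)=\M(X)\NN(X)=I$. Taking determinants shows $\det(\M(X))$ is a unit of $\LPm$. Conversely, if $\det(\M(X))$ is a unit then $\det(\M(X))^{-1}\operatorname{adj}(\M(X))$ is a two-sided inverse of $\M(X)$, and it is the matrix of a linear, hence bijective, CA inverting $F$. Thus $F$ is injective iff $\det(\M(X))$ is a unit of $\LPm$, which by the case $n=1$ is exactly the condition that $\det(\M(X))$ be the fps of an injective LCA over $\Z_m$.

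For surjectivity, invoke the Garden-of-Eden theorem for CA over $\Z$ with finite alphabet: $F$ is surjective iff it is pre-injective, \ie injective on pairs of configurations that differ in only finitely many cells. Because $F$ is additive, pre-injectivity says exactly that $\M(X)\,\P(X)=0$ forces $\P(X)=0$ as $\P(X)$ ranges over $\LPmn$, \ie that the $\LPm$-linear endomorphism of $(\LPm)^n$ induced by $\M(X)$ is injective. By McCoy's theorem over the commutative ring $\LPm$ this holds iff $\det(\M(X))$ is a non-zero-divisor of $\LPm$: the ``if'' part is immediate from the adjugate identity, since $\M(X)\P(X)=0$ gives $\det(\M(X))\,\P(X)=\operatorname{adj}(\M(X))\M(X)\P(X)=0$ and hence $\P(X)=0$ componentwise, while the converse is McCoy's theorem. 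Since the case $n=1$ says an LCA over $\Z_m$ is surjective iff its fps is a non-zero-divisor of $\LPm$, we conclude that $F$ is surjective iff $\det(\M(X))$ is the fps of a surjective LCA over $\Z_m$.

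The two non-routine ingredients are where the real work lies. First, one must justify that the inverse of an injective LCA is again an LCA — combining the Curtis--Hedlund--Lyndon description of CA with additivity to see that the inverse has a linear local rule — so that reversibility of $F$ yields an honest inverse matrix over $\LPm$ rather than a mere set-theoretic bijection. Second, the forcing direction of McCoy's theorem, producing a nonzero element of $(\LPm)^n$ annihilated by $\M(X)$ from the assumption that $\det(\M(X))$ is a zero-divisor, is the delicate half; everything else — the adjugate manipulations, the passage between configurations and elements of $\LPm$ and $(\LPm)^n$, and the appeal to the case $n=1$ — is mechanical, although it is worth verifying explicitly that the $n=1$ characterizations quoted from \cite{ito83} are the expected decidable ones: for injectivity, that modulo every prime divisor of $m$ exactly one coefficient of the polynomial is invertible (equivalently, the polynomial is a unit of $\LPm$), and for surjectivity, that the greatest common divisor of all its coefficients together with $m$ equals $1$ (equivalently, the polynomial is a non-zero-divisor of $\LPm$).
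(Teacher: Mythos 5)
The paper does not actually prove Proposition~\ref{prop:kari}: it is recalled from \cite{bruyn1991} and \cite{kari2000} and used as a black box, so there is no in-paper argument to compare yours against. Judged on its own merits, your reconstruction is correct and follows the standard route of Kari's proof: translate both properties into statements about $\M(X)$ acting on $(\LPm)^n$, use the adjugate identity and McCoy's theorem to collapse everything onto the single polynomial $\det(\M(X))$, and then invoke the $n=1$ characterizations (unit for injectivity, non-zero-divisor for surjectivity). The two ingredients you flag as non-routine are indeed the load-bearing ones, and both hold: injectivity of a CA on a full shift over a finite alphabet implies surjectivity by the Garden-of-Eden theorem, the inverse is a CA by Curtis--Hedlund--Lyndon and is additive because the inverse of an additive bijection is additive, which yields a genuine inverse matrix $\NN(X)$ with $\NN(X)\M(X)=I$ in $\mat{n}{\LPm}$ (the fps of a composition being the product of the fps); and the forcing half of McCoy's theorem over the commutative ring $\LPm$ does produce a nonzero annihilated vector when $\det(\M(X))$ is a zero-divisor. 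One attribution quibble: ``injective implies bijective'' for CA is a Moore--Myhill/Garden-of-Eden fact rather than a consequence of Hedlund's continuity characterization, though both results do appear in Hedlund's 1969 paper, so no harm is done.
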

We want to stress that, by Remark~\ref{rem:classes}, Definition~\ref{frobeniusLCA}, and Proposition~\ref{prop:homeo}, one can use the characterizations from Proposition~\ref{prop:kari} for deciding injectivity and surjectivity of linear HOCA. Summarizing, the following result holds.
\begin{proposition}
Injectivity and surjectivity are decidable properties for HOCA of memory size $n$ over  $\Z_m$.
\end{proposition}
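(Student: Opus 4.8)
The plan is to reduce the claim to already-established facts rather than prove anything new from scratch. By Proposition~\ref{prop:homeo}, a linear HOCA $\mathcal{H}$ over $\Z_m$ of memory size $n$ is, up to a homeomorphism, a Frobenius LCA $F$ over $\Z_m^n$, and this correspondence is effective: given $\mathcal{H}=\structure{k,\Z_m,r,h}$ with $h$ defined by coefficients $a^j_i\in\Z_m$, the matrices in~\eqref{M0} and~\eqref{Mi} are computable in an obvious way, and hence so is the associated finite fps $\M(X)=\modulo{\sum_{i=-r}^r \M_i X^{-i}}{m}\in\mat{n}{\LPm}$. Since injectivity and surjectivity are invariants of topological conjugacy, $\mathcal{H}$ is injective (resp.\ surjective) if and only if its matrix presentation $F$ is. So the first step is simply to observe that the input HOCA can be effectively transformed into its matrix presentation $\M(X)$.

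The second step is to invoke Proposition~\ref{prop:kari}: $F$ is injective (resp.\ surjective) if and only if $\det\M(X)$, computed as a Laurent polynomial in $\LPm$, is the fps associated with an injective (resp.\ surjective) LCA over $\Z_m$. Computing $\det\M(X)$ is a finite algebraic manipulation in $\LPm$, hence effective; note that when $\M(X)$ is in Frobenius normal form this determinant has a particularly simple closed form (up to sign, $\mm_0(X)$ together with the monomial coming from the superdiagonal), but this is not needed — the point is only that the determinant is computable.

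The third step is to feed $\det\M(X)$ into the decidable characterization of injectivity and surjectivity for LCA over $\Z_m$ established in~\cite{ito83} (and recalled in the paragraph preceding Proposition~\ref{prop:kari}): injectivity of a one-dimensional LCA over $\Z_m$ given by a Laurent polynomial is decidable, and likewise for surjectivity. Chaining the three effective steps — HOCA $\mapsto$ matrix presentation $\M(X)$, $\M(X)\mapsto\det\M(X)$, and then the decision procedure on $\det\M(X)$ — yields a decision procedure for injectivity and for surjectivity of $\mathcal{H}$, which is exactly the statement.

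There is essentially no obstacle here: the proposition is a corollary assembled from Proposition~\ref{prop:homeo}, Proposition~\ref{prop:kari}, and the decidability of the relevant properties for LCA over $\Z_m$. The only points that deserve an explicit sentence are (i) that the conjugacy of Proposition~\ref{prop:homeo} is effective, so that the matrix presentation can actually be written down from the HOCA's defining coefficients, and (ii) that injectivity and surjectivity are preserved under topological conjugacy, so that transferring the question from $\mathcal{H}$ to $F$ is legitimate. Both are routine.
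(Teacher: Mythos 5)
Your proposal is correct and follows exactly the paper's route: the paper derives this proposition by combining the identification of linear HOCA with Frobenius LCA (Remark~\ref{rem:classes}, Proposition~\ref{prop:homeo}) with the determinant characterization of Proposition~\ref{prop:kari} and the decidability results of~\cite{ito83}. Your added remarks about the effectiveness of the matrix presentation and the invariance of injectivity/surjectivity under topological conjugacy only make explicit what the paper leaves implicit.
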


In this paper we are going to adopt a similar attitude, \ie, we are going to characterise
the dynamical behaviour of linear HOCA by the properties of the matrices in their matrix
presentation. 

%
\section{Dynamical properties}
\label{sec:dyn}
In this paper we are particularly interested to the so-called \emph{sensitivity to the initial conditions} and \emph{equicontinuity}. As dynamical properties, they represent the main features of instable and stable DDS, respectively. The former is the well-known basic component and essence of the chaotic behavior of DDS, while the latter is a strong form of stability.

Let $(\mathcal{X}, \mathcal{F})$ be a DDS. 
The DDS $(\mathcal{X}, \mathcal{F})$ is \emph{sensitive to the initial conditions} (or simply \emph{sensitive}) if there exists $\varepsilon>0$ such that for any $x\in \mathcal{X}$ and any
$\delta>0$ there is an element $y\in \mathcal{X}$ such that
$d(y,x)<\delta$ and $d(\mathcal{F}^n(y),\mathcal{F}^n(x))>\varepsilon$
for some $n\in\n$. Recall that, by Knudsen's Lemma~\cite{knudsen94}, $(\mathcal{X}, \mathcal{F})$ is sensitive iff $(\mathcal{Y}, \mathcal{F})$ is sensitive where $\mathcal{Y}$ is any dense subset of $\mathcal{X}$ which is $\mathcal{F}$-invariant, \ie,  $\mathcal{F}(\mathcal{Y})\subseteq\mathcal{Y}$.

In the sequel, we will see that in the context of LCA an alternative way to study sensitivity is via equicontinuity points. An element $x\in \mathcal{X}$ is an \emph{equicontinuity point} for $(\mathcal{X}, \mathcal{F})$ if $\forall\varepsilon>0$
there exists $\delta>0$ such that for all $y\in \mathcal{X}$, $d(x,y)<\delta$ implies that $d(\mathcal{F}^n(y),\mathcal{F}^n(x))<\varepsilon$ for all $n\in\n$. The system $(\mathcal{X}, \mathcal{F})$ is said to be \emph{equicontinuous} if $\forall\varepsilon>0$ there exists $\delta>0$ such that for all $x,y\in \mathcal{X}$,
$d(x,y)<\delta$ implies that $\forall n\in\n,\;d(\mathcal{F}^n(x),\mathcal{F}^n(y))<\varepsilon$.  
Recall that any CA $(\az, F)$ is equicontinuous if and only if there exist two integers $q\in\n$ and  $p>0$ such that $F^q=F^{q+p}$~\cite{Ku97}. Moreover, for the subclass of LCA defined by $n=1$ the following result holds:
\begin{theorem}[\cite{ManziniM99}]
Let 
$(\Z_m^{\Z}, F)$ be a LCA where the local rule $f\colon(\Z_m)^{2r+1}\to\Z_m$ is defined by $2r+1$ coeffiecients  
$m_{-r},\ldots, m_0,\ldots, m_r\in\Z_m$. Denote by $\mathcal{P}$ the set of prime factors of $m$. The following statements are equivalent:
\begin{enumerate}
\item $F$ is sensitive to the initial conditions;
\item $F$ is not equicontinuous;
\item  there exists a prime number $p\in\mathcal{P}$ which does not divide $\gcd(m_{-r},\ldots, m_{-1},m_{1},\ldots, m_r)$.
\end{enumerate}
\end{theorem}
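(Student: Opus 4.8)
The plan is to prove the cycle of implications $(1)\Rightarrow(2)\Rightarrow(3)\Rightarrow(1)$. Throughout, a configuration $c\in\Z_m^{\Z}$ is identified with its fps $\P_c(X)=\sum_{i\in\z}c_iX^{i}$, and we write $\M(X)=\sum_{i=-r}^{r}m_iX^{-i}\in\Z_m[X,X^{-1}]$ for the Laurent polynomial associated with $F$, so that $\P_{F^{t}(c)}(X)=\M(X)^{t}\P_c(X)$ for every $t\in\n$, all arithmetic being performed in $\Z_m[X,X^{-1}]$. The implication $(1)\Rightarrow(2)$ is purely topological and holds for any DDS: if $F$ were both sensitive, with constant $\varepsilon$, and equicontinuous, then, picking the $\delta$ granted by equicontinuity at an arbitrary point $x$, no point within distance $\delta$ of $x$ could separate from $x$ by more than $\varepsilon$ along the orbit, contradicting sensitivity at $x$.

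For $(3)\Rightarrow(1)$ the idea is to exploit the dispersion of the support of $\M(X)^{t}$ produced by the Frobenius endomorphism. Assume $p$ is a prime dividing $m$ and $i_{0}\neq0$ is such that $p\nmid m_{i_{0}}$, and set $\ell=-i_{0}\neq0$, so that $X^{\ell}$ appears in $\M(X)$ with coefficient $m_{-\ell}=m_{i_{0}}$ not divisible by $p$. Reduction modulo $p$ is a ring homomorphism $\Z_m\to\Z_p$ (since $p\mid m$), and in $\Z_p[X,X^{-1}]$ one has $(\sum a_iX^{-i})^{p}=\sum a_i^{p}X^{-ip}$; iterating, $\M(X)^{p^{s}}\equiv\sum_i m_i^{p^{s}}X^{-ip^{s}}\pmod p$, whence the coefficient $a_{s}$ of $X^{\ell p^{s}}$ in $\M(X)^{p^{s}}$ satisfies $a_{s}\equiv m_{-\ell}^{\,p^{s}}\not\equiv0\pmod p$ and is therefore nonzero in $\Z_m$. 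Now fix $\varepsilon=1/2$, let $c\in\Z_m^{\Z}$ and $\delta>0$ be arbitrary, choose $s$ so large that $|\ell|\,p^{s}>-\log_{2}\delta$, and let $c'$ agree with $c$ everywhere except at cell $-\ell p^{s}$, where $c'_{-\ell p^{s}}=c_{-\ell p^{s}}+1$. Then $d(c,c')=2^{-|\ell|p^{s}}<\delta$, while $\P_{F^{p^{s}}(c')}(X)-\P_{F^{p^{s}}(c)}(X)=\M(X)^{p^{s}}X^{-\ell p^{s}}$ has coefficient $a_{s}\neq0$ at $X^{0}$, so $F^{p^{s}}(c)$ and $F^{p^{s}}(c')$ differ at cell $0$ and $d(F^{p^{s}}(c),F^{p^{s}}(c'))=1>\varepsilon$. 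Hence $F$ is sensitive.

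For $(2)\Rightarrow(3)$ we argue by contraposition: if $(3)$ fails, we show $F$ is equicontinuous. Failure of $(3)$ means that every prime factor of $m$ divides $\gcd(m_i:i\neq0)$. Writing $m=\prod_{j=1}^{s}p_j^{k_j}$, the Chinese Remainder Theorem together with additivity makes $F$ topologically conjugate to the product of the LCA $F_j$ over $\Z_{p_j^{k_j}}$ with coefficients $m_i$ reduced modulo $p_j^{k_j}$, and a finite product of DDS is equicontinuous iff each factor is; so it suffices to treat a single factor. Fix $j$ and write $p=p_j$, $k=k_j$, and $\M_j(X)\in\Z_{p^{k}}[X,X^{-1}]$ for the polynomial of $F_j$. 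By hypothesis $p\mid m_i$ for every $i\neq0$, so $\M_j(X)-m_0\in p\,\Z_{p^{k}}[X,X^{-1}]$, i.e.\ $\M_j(X)\equiv m_0\pmod p$. Applying $k-1$ times the elementary lemma ``$a\equiv b\pmod{p^{t}}$ implies $a^{p}\equiv b^{p}\pmod{p^{t+1}}$'', valid in every commutative ring for $t\geq1$, gives $\M_j(X)^{p^{k-1}}\equiv m_0^{p^{k-1}}\pmod{p^{k}}$; since the coefficients already live in $\Z_{p^{k}}$, this is an \emph{equality}, so $\M_j(X)^{p^{k-1}}$ is the constant $m_0^{p^{k-1}}\in\Z_{p^{k}}$. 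As $\Z_{p^{k}}$ is finite its powers are eventually periodic, hence so is the sequence $(\M_j(X)^{t})_{t}$, and consequently so is $(\M(X)^{t})_{t}$; therefore $F^{q}=F^{q+p'}$ for suitable $q\in\n$ and $p'>0$, and $F$ is equicontinuous by the criterion of~\cite{Ku97}. This closes the cycle, and the three statements are equivalent.

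The step I expect to be the main obstacle is the last one, namely showing that the divisibility of all off-centre coefficients by $p$ forces $\M_j(X)^{p^{k-1}}$ to collapse to a constant modulo $p^{k}$. This rests on the lifting lemma above, whose proof requires a small but careful computation with $p$-adic valuations of binomial coefficients (the case $p=2$ being the tightest), and on the routine but slightly fiddly passage from ``eventually periodic under powering'' to the equicontinuity criterion $F^{q}=F^{q+p'}$. The remaining ingredients — the Frobenius dispersion of the coefficients, the elementary topology behind $(1)\Rightarrow(2)$, and the CRT decomposition of additive CA into prime-power factors — are standard.
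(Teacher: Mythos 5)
Your proof is correct, but note that the paper never proves this statement itself: it is imported from~\cite{ManziniM99} and only re-derived implicitly as the $n=1$ specialization of the Section~4 machinery (Lemma~\ref{lem:dec} plus Lemma~\ref{frobeniusSensitivity}). Measured against that machinery, your argument is organised the same way but is technically lighter, as one would expect for $n=1$. Your $\neg(3)\Rightarrow\neg(2)$ step --- every prime factor of $m$ divides all off-centre coefficients, so over each $\Z_{p^k}$ one has $\M(X)=m_0+p\,Q(X)$ and the lifting lemma forces $\M(X)^{p^{k-1}}$ to be a constant, whence finitely many powers and equicontinuity --- is a sharpened, quantitative version of the first half of the paper's proof of Lemma~\ref{frobeniusSensitivity} (there the authors only argue that each term of $\M^t(X)$ carries a factor $p^j$ with $j<k$, so $|\{\M^i(X)\}|<\infty$). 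Your $(3)\Rightarrow(1)$ step replaces the paper's $\U(X)/\LL(X)$ construction, the graph $G_{\M}$, and the recurrence of Lemma~\ref{fiboext} --- all of which are needed for $n>1$ precisely because matrix powers do not obey the freshman's dream --- by the scalar Frobenius identity $\M(X)^{p^s}\equiv\sum_i m_i^{p^s}X^{-ip^s}\pmod p$, which instantly exhibits a coefficient prime to $p$ at a position drifting to infinity and hence an explicit sensitivity witness; this is more elementary and more informative (it gives the separation time $p^s$) but does not generalize to $n>1$. The two delicate points you flag are indeed the right ones and you handle them correctly: the lemma ``$a\equiv b\ (\mathrm{mod}\ p^t)\Rightarrow a^p\equiv b^p\ (\mathrm{mod}\ p^{t+1})$'' does hold in any commutative ring for $t\ge 1$ (the $j=p$ binomial term with $p=2$ is the tight case), and eventual periodicity of $(\M_j(X)^t)_t$ follows from writing $t=aN+b$ with $N=p^{k-1}$, so that $\M_j^t=c_0^a\M_j^b$ ranges over a finite set and the power sequence, being generated by iterating multiplication by $\M_j(X)$, is eventually periodic. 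No gaps.
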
  
The dichotomy between sensitivity and equicontinuity still holds for general LCA.
\begin{proposition}\label{prop:dich}
Let $\mathcal{L}=\structure{\Z_m^n, r, f}$ be a LCA where the local rule $f\colon(\Z_m^n)^{2r+1}\to\Z_m^n$ is defined by $2r+1$ matrices  
$\M_{-r},\ldots, \M_0,\ldots, \M_r\in\mat{n}{\Z_m}$. The following statements are equivalent:
\begin{enumerate}
\item $F$ is sensitive to the initial conditions;
\item $F$ is not equicontinuous;
\item $\left | \{\M(X)^i, i\geq 1 \} \right | = \infty$.
\end{enumerate}
\end{proposition}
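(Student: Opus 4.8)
The plan is to prove the cycle of implications $(1)\Rightarrow(2)\Rightarrow(3)\Rightarrow(1)$. The implication $(1)\Rightarrow(2)$ is essentially free: for a linear (hence additive) map on the Cantor space $\left(\Z_m^n\right)^{\Z}$, sensitivity and equicontinuity cannot coexist. Indeed, as recalled in the excerpt, a CA is equicontinuous iff $F^q=F^{q+p}$ for some $q\in\N$, $p>0$; in that case the forward orbit of every point is finite, so no point can be sensitive, giving the contrapositive of $(1)\Rightarrow(2)$. Alternatively, one invokes the general fact that an equicontinuous CA has every configuration as an equicontinuity point and hence is not sensitive.

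For $(2)\Rightarrow(3)$ I would argue the contrapositive: if $\left|\{\M(X)^i : i\geq 1\}\right|$ is finite, then there exist $q\in\N$ and $p>0$ with $\M(X)^{q+p}=\M(X)^q$ as elements of $\mat{n}{\LPm}$. Since the action of $F^t$ on a configuration $\ccc$ is exactly multiplication by $\M(X)^t$ on $\P_{\ccc}(X)$ (as set up in the excerpt), this yields $F^{q+p}=F^q$ as maps, and then the equicontinuity criterion $F^q=F^{q+p}$ of~\cite{Ku97} shows $F$ is equicontinuous, i.e. not ($2$).

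The substantive direction is $(3)\Rightarrow(1)$: if the powers $\M(X)^i$ are pairwise distinct for infinitely many $i$, then $F$ is sensitive. Here is the approach. By Knudsen's Lemma it suffices to exhibit sensitivity on the dense $F$-invariant set of finitely-supported configurations. Because $\{\M(X)^i\}$ is infinite, for every bound $N$ there is some $t$ such that $\M(X)^t$ has a nonzero coefficient (in some matrix entry) at a power $X^{j}$ with $|j|>N$: otherwise all powers would be supported on the fixed finite window $[-N,N]$ over the finite ring $\Z_m$, forcing finiteness. Fix the sensitivity constant $\varepsilon$ to correspond to a disagreement at coordinate $0$ (say $\varepsilon = 1$). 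Given any configuration $c$ and any $\delta>0$, pick $N$ with $2^{-N}<\delta$ and the corresponding $t$ and offset $j$ with $|j|>N$, together with the entry, say row $a$, column $b$, where $\M(X)^t$ has a nonzero coefficient $\lambda$ at $X^j$. Perturb $c$ by adding, in its $b$-th component, the value (an appropriate element making $\lambda\cdot(\cdot)\neq 0$ in $\Z_m$, e.g. picking a unit or using that $\lambda\neq 0$ means some multiple is nonzero) at position $-j$; call the result $y$. Then $d(c,y)\leq 2^{-N}<\delta$ since the perturbation is outside $[-N,N]$, while $\P_{F^t(y)}(X)-\P_{F^t(c)}(X) = \M(X)^t(\P_y(X)-\P_c(X))$ has, in component $a$, the nonzero term at $X^0$, so $F^t(y)$ and $F^t(c)$ differ at coordinate $0$ and $d(F^t(y),F^t(c))\geq \varepsilon$.

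The main obstacle is the bookkeeping in $(3)\Rightarrow(1)$: one must be careful that a nonzero coefficient $\lambda\in\Z_m$ of $\M(X)^t$ can be "activated" into a genuine output disagreement — if $\lambda$ is a zero divisor, adding $1$ in the input may not suffice, but choosing the input perturbation to be, say, $m/\gcd(\lambda,m)$ or simply any $s$ with $\lambda s\not\equiv 0 \pmod m$ works, and such $s$ exists precisely because $\lambda\neq 0$. A secondary subtlety is ensuring the claim "infinitely many distinct powers implies coefficients escape every finite window": this is the pigeonhole argument above, and it is where the finiteness of the alphabet $\Z_m$ is essential. Everything else is the translation, already furnished in the excerpt, between the dynamics of $F$ and multiplication by $\M(X)$ in $\mat{n}{\LPm}$, plus Knudsen's Lemma and the equicontinuity criterion of~\cite{Ku97}.
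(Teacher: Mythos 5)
Your proposal is correct, and for the substantive implication it takes a genuinely different route from the paper. The paper's (very terse) proof rests on the sensitivity/equicontinuity \emph{dichotomy} for linear CA: it establishes $1\Leftrightarrow 2$ by combining linearity with Knudsen's Lemma on the dense invariant set of finite configurations, and then links condition $3$ to condition $2$ through the recalled characterization $F$ equicontinuous $\Leftrightarrow F^q=F^{q+p}$ (only the direction $\neg 3\Rightarrow\neg 2$ is spelled out; the converse $\neg 2\Rightarrow\neg 3$ is the same observation read backwards). You instead bypass the dichotomy entirely: you prove $1\Rightarrow 2\Rightarrow 3\Rightarrow 1$, where $3\Rightarrow 1$ is done constructively --- a pigeonhole argument over the finite ring shows that infinitely many distinct powers force the supports of the $\M(X)^t$ to escape every window $[-N,N]$, and a far-away single-cell perturbation is then propagated to coordinate $0$ by additivity. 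This buys a self-contained, explicit proof of exactly the implication the paper leaves implicit (and later uses in Lemma~\ref{frobeniusSensitivity}), at the cost of some bookkeeping; the paper's route is shorter if one grants the standard dichotomy for additive CA. Three cosmetic remarks: your worry about zero divisors is moot, since $s=1$ already gives $\lambda\cdot 1=\lambda\neq 0$ in $\Z_m$; the sensitivity constant should be taken strictly below $1$ (e.g.\ $\varepsilon=1/2$), since a disagreement at coordinate $0$ gives distance exactly $1$ and the definition requires a strict inequality; and Knudsen's Lemma is not actually needed in your $3\Rightarrow 1$, because the perturbation argument applies verbatim to an arbitrary configuration $c$.
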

\begin{proof}
The equivalence between $1.$ and $2.$ is a consequence of linearity of $F$ and the Knudsen's Lemma applied on the subset of the finite configurations, \ie, those having a state different from the null vector only in a finite number of cells. Finally, if condition $3.$ is false, then $F$ is equicontinuous, that is $2.$ is false too.
\qed
\end{proof}
An immediate  consequence of Proposition~\ref{prop:dich} is that any characterization of sensitivity to the initial conditions in terms of the matrices defining LCA over $\Z_m^n$ would also provide a characterization of equicontinuity. In the sequel, we are going to show that such a characterization actually exists. First of all, we recall a result that helped in the investigation of dynamical properties in the case $n=1$ and we now state it in a more general form 
for LCA over $\Z_m^n$ (immediate generalisation of the result in~\cite{CattaneoDM04,DamicoMM03}). 

Let $\left ( (\Z_m^n)^{\Z}, F \right)$ be a LCA and let $q$ be any factor of $m$. 
We will denote by $\modulo{F}{q}$ the map $\modulo{F}{q}: (\Z_q^n)^{\Z}\to (\Z_q^n)^{\Z}$ defined as $\modulo{F}{q}(\ccc)=\modulo{F(\ccc)}{q}$, for any $\ccc\in(\Z_q^n)^{\Z}$.
\begin{lemma}[\cite{CattaneoDM04,DamicoMM03}]
\label{lem:dec}
Consider any LCA  $\left ( (\Z_m^n)^{\Z}, F \right)$  with $m=pq$ and $\gcd(p,q)=1$. It holds that the given LCA is topologically conjugated to $\left ( (\Z_p^n)^{\Z} \times (\Z_q^n)^{\Z}, \modulo{F}{p}\times\modulo{F}{q} \right)$.
\end{lemma}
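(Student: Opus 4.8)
The plan is to exhibit an explicit homeomorphism $\phi : (\Z_m^n)^{\Z} \to (\Z_p^n)^{\Z} \times (\Z_q^n)^{\Z}$ coming from the Chinese Remainder Theorem and check that it conjugates $F$ with $\modulo{F}{p}\times\modulo{F}{q}$. Since $\gcd(p,q)=1$, the ring isomorphism $\Z_m \cong \Z_p \times \Z_q$ (sending $a$ to $(\modulo{a}{p}, \modulo{a}{q})$) induces, componentwise and cellwise, a bijection $\phi$ between $(\Z_m^n)^{\Z}$ and $(\Z_p^n)^{\Z} \times (\Z_q^n)^{\Z}$; concretely $\phi(\ccc) = (\modulo{\ccc}{p}, \modulo{\ccc}{q})$, where $\modulo{\ccc}{p}$ denotes the configuration obtained by reducing every entry of every cell modulo $p$. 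First I would note that $\phi$ is a homeomorphism: it is a bijection by CRT, and both $\phi$ and $\phi^{-1}$ are continuous because they act cellwise (the image of a cell depends only on that cell), hence are $1$-Lipschitz for the standard metrics, the product space carrying the $\max$ metric as stipulated in Section~\ref{sec:def}.

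Next I would verify the intertwining identity $(\modulo{F}{p}\times\modulo{F}{q}) \circ \phi = \phi \circ F$. This is where the linearity of $F$ and the compatibility of reduction with the LCA rule enter. Writing $F$ via its matrices $\M_{-r},\ldots,\M_r \in \mat{n}{\Z_m}$, we have $F(\ccc)_i = \modulo{\sum_{j=-r}^{r}\M_j \ccc_{i+j}}{m}$, and reducing both sides modulo $p$ (resp. $q$) commutes with the matrix-vector sum because reduction modulo $p$ is a ring homomorphism $\Z_m \to \Z_p$ (well-defined precisely because $p \mid m$). Thus $\modulo{F(\ccc)}{p}$ depends only on $\modulo{\ccc}{p}$ and equals $\modulo{F}{p}(\modulo{\ccc}{p})$, where $\modulo{F}{p}$ is the LCA over $\Z_p^n$ defined by the matrices $\modulo{\M_j}{p}$; the same holds modulo $q$. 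Reading this on both coordinates of $\phi$ gives exactly $\phi(F(\ccc)) = (\modulo{F}{p}(\modulo{\ccc}{p}), \modulo{F}{q}(\modulo{\ccc}{q})) = (\modulo{F}{p}\times\modulo{F}{q})(\phi(\ccc))$, which is the required conjugacy.

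I do not expect a serious obstacle here; the only points needing care are bookkeeping ones. One should make explicit that $\phi^{-1}$ is also given cellwise (invert the CRT isomorphism on $\Z_p^n \times \Z_q^n \cong \Z_m^n$, applied in each cell) so that continuity of the inverse is immediate and $\phi$ is genuinely a homeomorphism of Cantor spaces rather than merely a continuous bijection. A second minor point is to confirm that $\modulo{F}{p}$ as defined above really is the same map as the one introduced just before the lemma statement, namely $\ccc \mapsto \modulo{F(\ccc)}{p}$ regarded as acting on $(\Z_p^n)^{\Z}$; this is immediate once one observes, as above, that $\modulo{F(\ccc)}{p}$ factors through $\modulo{\ccc}{p}$. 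With these remarks the lemma follows, and it generalises verbatim the $n=1$ argument of~\cite{CattaneoDM04,DamicoMM03}.
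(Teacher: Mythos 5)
Your proposal is correct. The paper itself gives no proof of this lemma — it is imported from \cite{CattaneoDM04,DamicoMM03} and described as an immediate generalisation of the $n=1$ case, so there is no argument in the text to compare against. Your Chinese-Remainder conjugacy $\phi(\ccc)=(\modulo{\ccc}{p},\modulo{\ccc}{q})$, with the cellwise continuity of $\phi$ and $\phi^{-1}$ and the observation that reduction modulo $p$ (resp.\ $q$) is a ring homomorphism commuting with the linear local rule, is exactly the standard argument behind the cited result and fills the gap completely.
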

As a consequence of Lemma~\ref{lem:dec}, if $m=p_1^{k_1} \cdots p_l^{k_l}$ is the prime factor decomposition of $m$, any LCA over $\Z_m^n$  is topologically conjugated to the product of LCAs over $\Z^n_{p_{i}^{k_i}}$. 
Since sensitivity is preserved under topological conjugacy for DDS over a compact space and the product of two DDS is sensitive if and only if at least one of them is sensitive, we will study sensitivity for Frobenius LCA over $\Z_{p^{k}}^n$. We will show a decidable characterization of sensitivity to the initial conditions for Frobenius LCA over $\Z_{p^{k}}^n$ (Lemma~\ref{frobeniusSensitivity}). Such a decidable characterization together with the previous remarks about the decomposition of $m$, the topological conjugacy involving  any LCA over $\Z_m^n$  and the product of LCAs over $\Z^n_{p_{i}^{k_i}}$, and how sensitivity behaves with respect to a topological conjugacy and the product of DDS, immediately lead to state the main result of the paper.
\begin{theorem}\label{main}
Sensitivity and Equicontinuity are decidable for Frobenius LCA over $\Z_m^n$, or, equivalently, for linear HOCA over $\Z_m$ of memory size $n$.
\end{theorem}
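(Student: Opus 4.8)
The plan is to reduce Theorem~\ref{main} to a decidable characterization of sensitivity for Frobenius LCA over the prime-power alphabet $\Z_{p^k}^n$, following exactly the road map laid out in the paragraph preceding the statement. First I would fix the prime factorization $m=p_1^{k_1}\cdots p_l^{k_l}$ and invoke Lemma~\ref{lem:dec} iteratively to obtain a topological conjugacy between any LCA over $\Z_m^n$ and the product $\prod_{i=1}^l \bigl((\Z_{p_i^{k_i}}^n)^{\Z},\modulo{F}{p_i^{k_i}}\bigr)$. Crucially, one must check that this decomposition respects the Frobenius normal form: since reduction modulo $p_i^{k_i}$ acts coefficientwise on the matrix $\M(X)$, a matrix in Frobenius normal form stays in Frobenius normal form after reduction, so each factor $\modulo{F}{p_i^{k_i}}$ is again a Frobenius LCA (this is the analogue, at the level of fps, of the observation that the block structure of~\eqref{M0} and~\eqref{Mi} is preserved). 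Then, by Proposition~\ref{prop:dich}, sensitivity and non-equicontinuity coincide for each factor and for the product; and since $(\az)^{\,k}$-type spaces are compact, sensitivity is preserved under topological conjugacy, while a finite product of DDS is sensitive iff at least one factor is. Hence the decidability of sensitivity (equivalently, of equicontinuity) for LCA over $\Z_m^n$ — and in particular for Frobenius LCA, hence by Proposition~\ref{prop:homeo} for linear HOCA over $\Z_m$ of memory size $n$ — reduces to the decidability of sensitivity for Frobenius LCA over $\Z_{p^k}^n$.

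The remaining and genuinely substantive step is the decidable characterization of sensitivity for a Frobenius LCA over $\Z_{p^k}^n$, which the excerpt announces as Lemma~\ref{frobeniusSensitivity}. By Proposition~\ref{prop:dich} this amounts to deciding whether the set $\{\M(X)^t : t\geq 1\}$ is infinite, where $\M(X)$ is the Frobenius matrix over $\Z_{p^k}\left[X,X^{-1}\right]$ built from the polynomials $\mm_0(X),\dots,\mm_{n-1}(X)$. I would attack this by exploiting the Frobenius structure: the characteristic polynomial of $\M(X)$ is the monic degree-$n$ polynomial $\lambda^n - \mm_{n-1}(X)\lambda^{n-1} - \cdots - \mm_1(X)\lambda - \mm_0(X)$ over the ring $\Z_{p^k}\left[X,X^{-1}\right]$, so powers of $\M(X)$ are governed by this companion relation and ultimately by the behaviour of the entries $\mm_i(X)$ and the constant term $\mm_0(X)$ (which controls invertibility/the lowest-order dynamics). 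Concretely, finiteness of $\{\M(X)^t\}$ forces, after reduction mod $p$, that $\M(X)$ be (eventually) periodic, which in the $n=1$, $k=1$ case recovers the classical condition that all non-central coefficients vanish mod $p$; for general $n,k$ one expects a condition on which of the $\mm_i(X)$ are monomials versus genuinely multi-term polynomials, together with $p$-adic conditions on their coefficients coming from the nilpotent part mod $p^k$. The characterization will then be decidable because it is a finite conjunction of conditions — divisibility of finitely many integer coefficients by $p$, and syntactic conditions on the supports of finitely many Laurent polynomials — each of which is trivially checkable.

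The main obstacle, I expect, is precisely proving Lemma~\ref{frobeniusSensitivity}: controlling when $\{\M(X)^t\}$ is infinite for a Frobenius matrix over $\Z_{p^k}\left[X,X^{-1}\right]$ when $k>1$ and $n>1$ simultaneously. For $k=1$ the coefficient ring $\Z_p$ is a field and one can diagonalize / use the companion-polynomial factorization over $\Z_p\left[X,X^{-1}\right]$ or its field of fractions, so the growth of $\deg$ of the entries of $\M(X)^t$ is tractable; but for $k>1$ the presence of zero divisors and nilpotents means powers can stabilize or explode for subtler reasons, and one must track the interaction between the ``shift-like'' growth in the $X$-direction and the nilpotent layers of $\Z_{p^k}$. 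I would handle this by a further reduction: analyze $\modulo{\M(X)}{p}$ first to pin down when infiniteness already occurs at the field level, and when it does not, lift carefully through the filtration $p\Z_{p^k}\supset p^2\Z_{p^k}\supset\cdots$, showing that the only way to keep $\{\M(X)^t\}$ finite is for each successive quotient to contribute only boundedly — this is the part that needs real work and where the analogue of the Manzini--Margara arguments for $n=1$ must be generalized. Once Lemma~\ref{frobeniusSensitivity} is in hand, Theorem~\ref{main} follows immediately by the assembly in the first paragraph.
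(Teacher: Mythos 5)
Your first paragraph reproduces, correctly and essentially verbatim, the paper's own derivation of Theorem~\ref{main}: decompose $m$ into prime powers, apply Lemma~\ref{lem:dec} iteratively, observe that Frobenius normal form survives coefficientwise reduction, and combine Proposition~\ref{prop:dich} with the facts that sensitivity is a conjugacy invariant on compact spaces and that a finite product is sensitive iff some factor is. If Lemma~\ref{frobeniusSensitivity} could be cited as a black box, this assembly would be a complete proof, and it is exactly the paper's.

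The genuine gap is that you do not prove Lemma~\ref{frobeniusSensitivity}, and you flag this yourself (``this is the part that needs real work''). What you offer in its place is neither a precise criterion nor an argument: you conjecture a condition involving ``which of the $\mm_i(X)$ are monomials versus genuinely multi-term polynomials, together with $p$-adic conditions on their coefficients,'' and propose a lift through the filtration $p\Z_{p^k}\supset p^2\Z_{p^k}\supset\cdots$. The paper's actual criterion is both sharper and simpler than your guess: $F$ is sensitive iff some $\mm_i(X)$ is a \emph{sensitive polynomial} in the sense of Definition~\ref{polsens}, \ie it contains a monomial of nonzero degree whose coefficient is coprime to $p$; whether $\mm_i(X)$ is a monomial or multi-term plays no role. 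Its proof is also not a filtration or diagonalization argument. One extracts from $\M(X)$ the auxiliary Frobenius matrices $\U(X)$ and $\LL(X)$ keeping only the extremal-degree, unit-coefficient monomials that realize the maximal slopes $d^+$ and $d^-$; interprets entries of powers as weighted path sums in the graph $G_{\M}$, so that $\U^t(X)^n_n$ is always a single monomial of degree $td^+$ (Lemma~\ref{sempremonomio}); proves via the linear-recurrence Lemma~\ref{fiboext} that its coefficient is a unit modulo $p$ for infinitely many $t$ (Lemma~\ref{infinitim}); and transfers this unboundedness back to $\M^t(X)^n_n$ through Lemmata~\ref{corollario} and~\ref{piup}, using the splitting $\M(X)=\widehat{\M}(X)+p\M'(X)$. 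The converse direction --- if no $\mm_i(X)$ is sensitive then every term of $\M^t(X)$ carries a factor $p^j$ with $j$ bounded by $k$, so only finitely many powers occur --- must also be argued. None of this appears in your proposal, so as written you have reduced the theorem to an unproved lemma rather than proved it.
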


\section{Sensitivity of Frobenius LCA over $\Z^n_{p^k}$}
 
 
 
In order to study sensitivity of Frobenius LCA over $\Z^n_{p^k}$,  we introduce two concepts about Laurent polynomials. 
\begin{definition}[$deg^+$ and $deg^-$]
  \label{degree}
Given any polynomial $\pp(X)\in\LPK$, the \emph{positive} (resp., \emph{negative}) \emph{degree of $\pp(X)$}, denoted by  $deg^+[\pp(X)]$ (resp., $deg^-[\pp(X)]$) is the maximum (resp., minimum) degree among those of the monomials having both positive (resp., negative) degree and coefficient which is not multiple of $p$. If there is no monomial satisfying both the required conditions, then $deg^+[\pp(X)]=0$ (resp., $deg^-[\pp(X)]$=0).
\end{definition}

\begin{example}
Consider the Laurent polynomial $\pp(X)=4X^{-4}+3X^{-3}+3+7X^{2}+6X^{5}$ with coefficients in  $\Z_8$. Then, $deg^+[\pp(X)]=2$ and $deg^-[\pp(X)]=-3$.
\end{example}

\begin{definition}[Sensitive polynomial]\label{polsens}
A polynomial $\pp(X)\in\LPK$ is \emph{sensitive} if either $deg^+[\pp(X)]>0$ or $deg^-[\pp(X)]<0$. 
As a consequence, a Laurent polynomial $\pp(X)$ is not sensitive iff $deg^+[\pp(X)]=deg^-[\pp(X)]=0$.
\end{definition}
Trivially, it is decidable to decide whether a Laurent polynomial is sensitive.
\begin{remark}
Consider a matrix $\M(X)\in\mat{n}{\LPK}$ in Frobenius normal form. The characteristic polynomial of $\M(X)$ is then
$\mathscr{P}(y)=(-1)^n\left(-\mm_0(X)-\mm_1(X) y -\cdots - \mm_{n-1}(X) y^{n-1} + y^n\right)$. 
By the Cayley-Hamilton Theorem, one obtains
\begin{eqnarray}\label{cayleyeq}
\M^n(X) = \mm_{n-1}(X) \M(X)^{n-1} +\dots + \mm_{1}(X) \M(X)^{1} + \mm_{0}(X) I \enspace.
\end{eqnarray}
\end{remark}
We now introduce two further matrices that will allows us to access the information hidden inside $\M(X)$.
\begin{definition}[$\U(X)$, $\LL(X)$, $d^+$, and $d^-$]
\label{updown}
For any matrix $\M(X)\in \mat{n}{\LPK}$ in Frobenius normal form 
the matrices 
$\U(X), \LL(X)\in \mat{n}{\LPK}$
associated with $\M(X)$ are the matrices in Frobenius normal 
where each component $\uu_i(X)$ and $\ll_i(X)$ (with $i=0, \ldots, n-1$) of the $n$-th row 
$\myvec{\uu}(X)$ and $\myvec{\ll}(X)$ of $\U(X)$ and $\LL(X)$, respectively, 
is defined as follows:
%
\begin{align*}
\uu_i(X)&=
\begin{cases} \text{monomial  of degree } deg^+[\mm_i(X)] \text{ inside } \mm_i(X) & \text{if } d_i^+=d^+\\
0 & \text{otherwise}
\end{cases}
\\
\ll_i(X)&=
\begin{cases} \text{monomial  of degree } deg^-[\mm_i(X)] \text{ inside } \mm_i(X) & \text{if } d_i^-=d^-\\
0 & \text{otherwise}
\end{cases}
\enspace, 
\end{align*}
where $d_i^+=\frac{deg^+[\mm_i(X)]}{n-i}$, $d_i^-=\frac{deg^-[\mm_i(X)]}{n-i}$, $d^+=\max\{d_i^+\}$, and $d^-=\min\{d_i^-\}$.
\end{definition}
\begin{example}\label{exupdown} Consider the following matrix $\M(X)\in\mat{4}{\Z_{49}[X,X^{-1}]}$ in Frobenius normal form
\[
\M(X) =\begin{bmatrix}
 0& 1 &0     & 0   \\
 0& 0 & 1     & 0    \\
 0& 0 &0     & 1    \\
 X^{-2}+1+X+2X^8 +14X^{123} & 3 X^{-3}+3+X^2 & 21X^{-70}+4 X^{-1}+3 X^4     &7 X^{-35}+ X^{-1}+3    \\
 \end{bmatrix} 
\]
We get $d_0^+=2,\ d_1^+=\frac{2}{3},\ d_2^+=2,\ d_3^+=0$ and $d_0^-=-\frac{1}{2},\ d_1^-=-1,\ d_2^-=-\frac{1}{2},\ d_3^-=-1$. Since $d^+=2$ and $d^-=-1$, it holds that $\uu_0(X)=2 X^8,\ \uu_1(X)=0,\ \uu_2(X)=3 X^4,\ \uu_3(X)=0$ and
$\ll_0(X)=0,\ \ll_1(X)=3 X^{-3},\ \ll_2(X)=0,\ \ll_3(X)=X^{-1}$.
%
Therefore,
$$\U(X) =\begin{bmatrix}
 0& 1 &0     &0    \\
 0& 0 &1     &  0   \\
 0& 0 &0     & 1    \\
 2 X^8 & 0 & 3 X^4     & 0    \\
 \end{bmatrix} 
$$
and
$$\LL(X) =\begin{bmatrix}
 0& 1 &0     & 0    \\
 0& 0 &1     &   0  \\
 0& 0 &0     & 1    \\
 0& 3 X^{-3} &0     & X^{-1}    \\
 \end{bmatrix} 
$$

\end{example}

\begin{definition}[$\widehat{\M}(X)$ and $\overline{\M}(X)$]
  \label{partevera}
For any Laurent polynomial $\pp(X)\in\LPK$,  
$\widehat{\pp}(X)$ and $\overline{\pp}(X)$ are defined as the Laurent polynomial obtained from $\pp(X)$ by removing all the monomials having coefficients that are multiple of $p$ and $\overline{\pp}(X)= \pp(X)-\widehat{\pp}(X)$, respectively. These definitions extend component-wise to vectors.
For any matrix $\M(X)\in \mat{n}{\LPK}$ in Frobenius normal form, 
$\widehat{\M}(X)$ and $\overline{\M}(X)$ are defined as the matrix obtained from $\M(X)$ by replacing its 
$n$-th row $\myvec{\mm}(X)$ with $\widehat{\myvec{\mm}}(X)$
and $\overline{\M}(X)=\M(X)-\widehat{\M}(X)$, respectively.
\end{definition}
It is clear that any matrix $\M(X)\in \mat{n}{\LPK}$ in Frobenius normal form can be written as $\M(X)=\widehat{\M}(X) + p \overline{\M'}(X)$, for some $\M'(X)\in \mat{n}{\LPK}$. 

\begin{definition}[Graph $G_{\M}$]
  \label{CH}
Let $\M(X)\in \mat{n}{\LPK}$ be any matrix in Frobenius normal form.  
 The graph $G_{\M}=\langle V_{\M}, E_{\M}\rangle$ associated with $\M(X)$ is such that $V_{\M}=\{1, \ldots, n\}$ and 
 $E_{\M}=\{(h,k)\in V^2_{\M} | \, \M(X)^{h}_k \neq 0\}$.
 Moreover, each edge $(h,k)\in E_{\M}$ is labelled with $\M(X)^{h}_k$. 
 \end{definition}
Clearly, for any matrix $\M(X)\in \mat{n}{\LPK}$ in Frobenius normal form, any natural $t>0$, and any pair $(h,k)$ of entries, the element $\M^t(X)^{h}_k$ is the sum of the weights of all paths of length $t$ starting from $h$ and ending to $k$, where the weight of a path is the product of the labels of its edges.
\begin{example}\label{ex:mat4x4}
Consider any matrix $\M(X)\in\mat{4}{\LPK}$ in Frobenius normal form. The graph $G_{\M}$ associated with $\M(X)$ is represented 
in Figure~\ref{fig:grafo4x4}. It will help to compute $\M^t(X)^{h}_k$.  
Indeed, $\M^t(X)^{h}_k$ is the sum of the labels of all paths of length $t$ from vertex $k$ to $h$ (labels along edges of the same path multiply).
For example for $(h,k)=(4,4)$ one finds
\begin{align*}
\M^1(X)^{4}_4
&=\mm_3(X) \\
\M^2(X)^{4}_4
&=(\mm_3(X))^2+\mm_2(X)\\
\M^3(X)^{4}_4
&=\mm_1(X) + 2\mm_2(X)\mm_3(X) + (\mm_3(X))^3\\
\end{align*}
and so on.
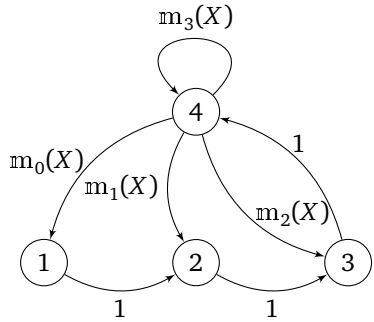
\begin{figure}[htb]
\begin{center}
\begin{tikzpicture}
\tikzset{vertex/.style = {shape=circle,draw,minimum size=1.5em}}
\tikzset{edge/.style = {->,> = latex'}}
\node[vertex] (c0) at  (0,0) {$1$};
\node[vertex] (c1) at (2,0) {$2$};
\node[vertex] (c2) at (4,0) {$3$};
\node[vertex] (c3) at  (2,2) {$4$};
\draw[edge] (c3)  to[bend right]  node[left] {$\mm_0(X)$} (c0)  ;
\draw[edge] (c0) to[bend right] node[below] {1} (c1);
\draw[edge] (c3) to[bend right] node[left] {$\mm_1(X)$} (c1);
\draw[edge] (c1) to[bend right] node[below] {1} (c2);
\draw[edge] (c3) to[bend right] node[right] {$\mm_2(X)$} (c2);
\draw[edge] (c2) to[bend right] node[above] {1} (c3);
\draw[edge] (c3) to[loop]  node[above]  {$\mm_3(X)$} (c3); 
\end{tikzpicture}
\end{center}
\caption{Graph $G_{\M}$ associated with $\M(X)$ from Example~\ref{ex:mat4x4}.}
\label{fig:grafo4x4}
\end{figure}
\end{example}
%
%
\begin{lemma}\label{piup}
Let $p>1$ be a prime number and $a, b\geq 0$, $k>0$ be integers such that $1\leq a < p^k$ and $\gcd(a,p)= 1$. Then,
\begin{equation}\label{lemmapiup}
\modulo{a+p b}{p^k}\neq 0
\end{equation}
\end{lemma}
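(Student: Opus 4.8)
The plan is to argue by contradiction, reducing the claim to a single elementary divisibility observation. Suppose that $\modulo{a+pb}{p^k}=0$; by definition of the reduction modulo $p^k$ this means precisely that $p^k$ divides $a+pb$.

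Since $k>0$, the divisibility $p^k \mid a+pb$ entails in particular $p \mid a+pb$. As $p$ trivially divides $pb$, it also divides the difference $(a+pb)-pb=a$, i.e.\ $p\mid a$. This contradicts the hypothesis $\gcd(a,p)=1$, so we must conclude $\modulo{a+pb}{p^k}\neq 0$, as desired.

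I do not anticipate any genuine obstacle here: the lemma is really just the observation that adding a multiple of $p$ to an integer coprime to $p$ cannot produce a multiple of $p$, and a fortiori not a multiple of $p^k$. Note that the hypothesis $1\le a<p^k$ plays no role in the divisibility argument; it is included only because in the intended applications $a$ will already be a coefficient written in reduced form modulo $p^k$ (typically read off from a Laurent polynomial over $\Z_{p^k}$), so it is convenient to record the bound directly in the statement.
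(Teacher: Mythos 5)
Your proof is correct and is essentially the same argument as the paper's: both assume $p^k\mid a+pb$ and deduce $p\mid a$, contradicting $\gcd(a,p)=1$. Your side remark that the bound $1\le a<p^k$ is not actually used in the argument is also accurate (the paper's proof does not use it either).
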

\begin{proof}
For the sake of argument, assume that  $\modulo{a+p  b}{p^k} = 0$. Thus, $a+p  b= c  p^k$ for some $c\geq 0$ and then $a= c  p^k - p b = p(c  p^{k-1}-b )$, that contradicts $\gcd(a,p)= 1$.\qed
\end{proof}

\begin{lemma}\label{fiboext}
Let $p>1$ be a prime number and $h,k$ be two positive integers. Let $l_1,\dots,l_h$  and $\alpha_1,\dots,\alpha_h$ be positive integers such that $l_1<l_2<\cdots <l_h$ and for each $i=1, \ldots, h$ both  $1\leq \alpha_i< p^k$  and $\gcd(\alpha_i,p)=1$ hold. Consider the sequence $b:\Z\to\Z_{p^k}$ defined for any $l\in\Z$ as
\begin{eqnarray}\label{eq1}
b_l&=&\modulo{\alpha_1 b_{l-l_1}+\cdots + \alpha_h b_{l-l_h}}{p^k} \quad \text{ if } l>0\\
\nonumber
b_0&=&1\\
\nonumber
b_l&=&0\quad\text{ if } l<0
\end{eqnarray}
Then, it holds that $\modulo{b_l}{p} \neq 0$ for infinitely many $l\in\N$.
\end{lemma}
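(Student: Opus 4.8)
The plan is to reduce everything modulo $p$ at the very start, where the recurrence turns into a linear recurrence over the \emph{field} $\Z_p$ all of whose coefficients are nonzero, and then to show that the reduced sequence cannot have finite support. Concretely, I would set $\beta_i=\modulo{\alpha_i}{p}$ for $i=1,\dots,h$; the hypothesis $\gcd(\alpha_i,p)=1$ forces $\beta_i\neq 0$ in $\Z_p$. Writing $\bar b_l=\modulo{b_l}{p}$, the sequence $\bar b:\Z\to\Z_p$ satisfies
\[
\bar b_l=\beta_1\bar b_{l-l_1}+\cdots+\beta_h\bar b_{l-l_h}\quad(l>0),\qquad \bar b_0=1,\qquad \bar b_l=0\ (l<0),
\]
and the statement to prove becomes exactly: $\bar b_l\neq 0$ for infinitely many $l\in\N$.

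The core step is a maximality argument. Suppose for contradiction that $S=\{\,l\in\N:\bar b_l\neq 0\,\}$ is finite. It is nonempty, since evaluating the recurrence at $l=l_1$ gives $\bar b_{l_1}=\beta_1\bar b_0+\sum_{i\geq 2}\beta_i\bar b_{l_1-l_i}=\beta_1\neq 0$, because $l_1-l_i<0$ for every $i\geq 2$. Hence $\ell=\max S$ is well defined and $\ell\geq l_1\geq 1$. Now apply the recurrence at the index $\ell+l_h$, which is legitimate because $\ell+l_h>0$:
\[
\bar b_{\ell+l_h}=\sum_{i=1}^{h}\beta_i\,\bar b_{\ell+l_h-l_i}=\beta_h\bar b_\ell+\sum_{i=1}^{h-1}\beta_i\,\bar b_{\ell+(l_h-l_i)}.
\]
For each $i\leq h-1$ one has $l_h-l_i>0$, hence $\ell+(l_h-l_i)>\ell=\max S$ and therefore $\bar b_{\ell+(l_h-l_i)}=0$. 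Thus $\bar b_{\ell+l_h}=\beta_h\bar b_\ell$, which is nonzero because $\Z_p$ is a field and both factors are nonzero. This contradicts the maximality of $\ell$, since $\ell+l_h>\ell$. Therefore $S$ is infinite, which proves the lemma.

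Two alternative routes are worth recording. One can stay over $\Z_{p^k}$ and run the same maximality argument directly: if $\ell$ is the largest index with $b_\ell$ not a multiple of $p$, then $b_{\ell+l_h}\equiv\alpha_h b_\ell+p\,c\pmod{p^k}$ for a suitable $c$, and since $\alpha_h b_\ell$ is coprime to $p$ one invokes Lemma~\ref{piup} to conclude that $b_{\ell+l_h}$ is again not a multiple of $p$; this is presumably why Lemma~\ref{piup} is proved just before. A third route uses generating functions: with $\bar Q(X)=1-\sum_i\beta_i X^{l_i}\in\Z_p[X]$ (which has degree $l_h\geq 1$ and constant term $1$), the series $\sum_{l\geq 0}\bar b_l X^l$ equals $\bar Q(X)^{-1}$ in $\Z_p[[X]]$; if this were a polynomial it would be a unit of $\Z_p[X]$ of positive degree, which is impossible, so its coefficient sequence has infinite support.

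I do not expect a genuine obstacle here: the argument is short. The only care needed is index bookkeeping — checking that the evaluation index $\ell+l_h$ is strictly positive so the recurrence applies, and that all summands except the $i=h$ one land strictly beyond $\max S$. Establishing $\ell\geq 1$ through the explicit value $\bar b_{l_1}=\beta_1$ is exactly what makes this clean, and it is the one place where the hypotheses $l_1<\cdots<l_h$ and $\gcd(\alpha_i,p)=1$ are both essential.
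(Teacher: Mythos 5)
Your proof is correct and follows essentially the same route as the paper's: both take the largest index $\ell$ (the paper's $h_0$) at which $b$ is a unit modulo $p$ and evaluate the recurrence at $\ell+l_h$ to exhibit a later such index, contradicting maximality. The only real difference is cosmetic — you reduce modulo $p$ first so the final step is just that a product of nonzero elements of the field $\Z_p$ is nonzero, whereas the paper stays in $\Z_{p^k}$ and does the divisibility bookkeeping by hand (your ``alternative route'' is in fact closer to what the paper literally writes, although the paper does not actually invoke Lemma~\ref{piup} there).
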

\begin{proof}
Set $dz=\left |\{ l\in\N : \modulo{b_l}{p} \neq 0 \}\right |$. For the sake of argument, assume that $dz$ is finite. Then, there exists $h_0\in\N$ such that $\modulo{b_{h_0}}{p} \neq 0$ and $\modulo{b_{l}}{p} = 0$ for all $l>h_0$. Thus, there exist non negative integers $s_0, s_1, \ldots, s_{h-1}$ such that  $b_{l_h+h_0}=ps_0$ and $b_{l_h+h_0-l_i}=ps_i$ for each $i=1, \ldots, h-1$. Equation~\eqref{eq1}  can be rewritten with $l=l_h+h_0$ as 
\begin{equation*}\label{eq2}
b_{l_h+h_0}=\modulo{\alpha_1b_{l_h+h_0-l_1}+\cdots + \alpha_{h-1} b_{l_h +h_0 - l_{h-1}} + \alpha_h b_{h_0}}{p^k}\enspace,
\end{equation*}
which gives 
\begin{equation*}\label{eq2}
ps_0=\modulo{\alpha_1ps_1+\cdots + \alpha_{h-1} ps_{h-1} + \alpha_h b_{h_0}}{p^k}\enspace.
\end{equation*}
Thus, there  must exist an integer $s\geq 0$ such that
\begin{equation*}\label{eq2}
 p\sum_{i=1}^{h-1} \alpha_i s_i  + \alpha_h b_{h_0} =s p^k + ps_0\enspace,
\end{equation*}
or, equivalently,
\begin{equation*}\label{eq2}
  \alpha_h b_{h_0} =p\left ( s p^{k-1} + s_0 - \sum_{i=1}^{h-1} \alpha_i s_i \right ) \enspace,
\end{equation*}
with $ps_0 < p^k$. If $s p^{k-1} + s_0 - \sum_{i=1}^{h-1} \alpha_i s_i =0$, we get  $\alpha_h b_{h_0}=0$ that contradicts either the assumption $\modulo{b_{h_0}}{p} \neq 0$ or the hypothesis $1\leq \alpha_h< p^k$. Otherwise, $p$ must divide either $\alpha_h$ or $b_{h_0}$. However, that is impossible since $\gcd(\alpha_h,p)=1$ and $\gcd(b_{h_0},p)=1$.
\qed 
\end{proof}

For any matrix $\M(X)\in\mat{n}{\LPK}$ in Frobenius normal form, we are now going to study the behavior of $\U^{t}(X)$ and $\LL^{t}(X)$, and, in particular, of their elements $\U^{t}(X)^n_n$ and $\LL^{t}(X)^n_n$. These will turn out to be crucial in order to establish the sensitivity of the LCA defined by $\M(X)$.
To make our arguments clearer we prefer to start with an example.

\begin{ex}\label{monex}
Consider the following matrix $\M(X)\in\mat{4}{\Z_{49}[X,X^{-1}]}$ in Frobenius normal form
\[
\M(X) =\begin{bmatrix}
 0& 1 &0    &    0 \\
 0& 0 &1   & 0    \\
 0& 0 &0    &  1    \\
 X^{-2}+1+X+16X^6& 13 X^{-3}+3+X^2 &34 X^{-1}+8 X^3    & X^{-1}+31    \\
 \end{bmatrix} 
 \in\mat{4}{\Z_{49}[X,X^{-1}]}
\]
One finds $d^+=3/2$ and then
\[
\U(X) =\begin{bmatrix}
 0& 1 &0     & 0    \\
 0& 0 &1     &  0   \\
 0& 0 &0     & 1    \\
 16 X^6& 0 & 8 X^3 & 0    \\
 \end{bmatrix} 
\]
The graph $G_{\U}$ is represented in Figure~\ref{fig:monex}, along with the values $\U^t(X)^4_4$, $\modulo{\U^t(X)^4_4}{49}$, and $td^+$, for $t=1, \ldots,  8$.
%
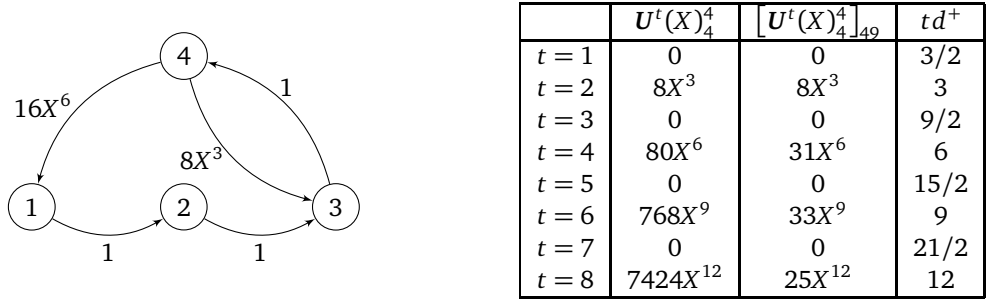
\begin{figure}[htb]
  \centering
  \begin{minipage}{0.45\textwidth}
    \centering
      \begin{tikzpicture}
        \tikzset{vertex/.style = {shape=circle,draw,minimum size=1.5em}}
        \tikzset{edge/.style = {->,> = latex'}}
        \node[vertex] (c0) at  (0,0) {$1$};
        \node[vertex] (c1) at (2,0) {$2$};
        \node[vertex] (c2) at (4,0) {$3$};
        \node[vertex] (c3) at  (2,2) {$4$};
        \draw[edge] (c3)  to[bend right]  node[left] {$16X^6$} (c0)  ;
        \draw[edge] (c0) to[bend right] node[below] {1} (c1);
        \draw[edge] (c1) to[bend right] node[below] {1} (c2);
        \draw[edge] (c3) to[bend right] node[left] {$8X^3$} (c2);
        \draw[edge] (c2) to[bend right] node[above] {1} (c3);
      \end{tikzpicture}
    \end{minipage}
    \begin{minipage}{0.45\textwidth}
      \centering
      \begin{tabular}{|c|c|c|c|}
        \hline
        & $\U^t(X)^4_4$ & $\modulo{\U^t(X)^4_4}{49}$&$td^+$ \\
        \hline
        $t=1$ & $0$ & $0$ &$3/2$\\
        $t=2$ & $8X^3$ & $8X^3$&$3$  \\
        $t=3$ & $0$ & $0$ &$9/2$\\
        $t=4$ & $80X^6$  & $31X^6$ &$6$\\
        $t=5$ & $0$ & $0$ &$15/2$\\
        $t=6$ & $768X^9$ & $33X^9$ &$9$\\
        $t=7$ & $0$ & $0$&$21/2$ \\
        $t=8$ & $7424 X^{12}$ & $25X^{12}$&$12$ \\
        \hline
      \end{tabular}
    \end{minipage}
\caption{The graph $G_{\U}$ (on the left), and the  values $\U^t(X)^4_4$, $\modulo{\U^t(X)^4_4}{49}$, and $td^+$, for $t=1, \ldots,  8$ (on the right) from Example~\ref{monex}.}
\label{fig:monex}
\end{figure}
\end{ex}
\begin{notation}
For a sake of simplicity, for any given matrix $\M(X)\in\mat{n}{\LPK}$ in Frobenius normal form, from now on we will denote by  $\uu^{(t)}(X)$ and $\ll^{(t)}(X)$ the elements $(\U^t(X))_{n}^n$ and $\LL^{t}(X)^n_n$, respectively. 
\end{notation}
\begin{lemma}\label{sempremonomio}
Let $\M(X)\in \mat{n}{\LPK}$ be a matrix such that $\M(X)=\widehat{\NN}(X)$ for some matrix $\NN(X) \in \mat{n}{\LPK}$ in Frobenius normal form. 
For any natural $t>0$, $\uu^{(t)}(X)$ (resp., $\ll^{(t)}(X)$) is either null or a monomial of degree $td^+$ (resp., $td^-$). 
\end{lemma}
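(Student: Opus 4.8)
The plan is to reduce the computation of $\uu^{(t)}(X)=(\U^{t}(X))^{n}_{n}$ (and symmetrically of $\ll^{(t)}(X)=(\LL^{t}(X))^{n}_{n}$) to a count of closed walks in the graph associated with $\U(X)$, and then to read off degrees. Since $\M(X)=\widehat{\NN}(X)$ is still in Frobenius normal form, so is the matrix $\U(X)$ built from it by Definition~\ref{updown}; hence the observation following Definition~\ref{CH} applies to $\U(X)$: the entry $\uu^{(t)}(X)$ equals the sum, over all walks of length $t$ from vertex $n$ to vertex $n$ in $G_{\U}$, of the product of the labels of the traversed edges. The graph $G_{\U}$ is very rigid: its edges are the backbone edges $i\to i+1$ labelled $1$, for $i=1,\ldots,n-1$, together with exactly one edge $n\to i+1$ labelled $\uu_{i}(X)$ for each $i\in\{0,\ldots,n-1\}$ with $\uu_{i}(X)\neq 0$ (the case $i=n-1$ being the loop at $n$).

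First I would isolate the building block of any closed walk at $n$, call it an \emph{excursion}: after leaving $n$ along the edge $n\to i+1$, every vertex $j$ with $j<n$ has a unique outgoing edge, namely $j\to j+1$, so the walk is forced to proceed $i+1\to i+2\to\cdots\to n$; such an excursion therefore has length $n-i$ and total weight $\uu_{i}(X)$. Hence a closed walk of length $t$ at $n$ is nothing but a concatenation of excursions, encoded by a finite sequence $(i_{1},\ldots,i_{r})$ with $r\geq 1$, $\uu_{i_{s}}(X)\neq 0$ for all $s$, and $\sum_{s=1}^{r}(n-i_{s})=t$; its weight is $\prod_{s=1}^{r}\uu_{i_{s}}(X)$. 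This yields
\[
\uu^{(t)}(X)=\sum\ \prod_{s=1}^{r}\uu_{i_{s}}(X),
\]
the sum ranging over all such sequences (and being empty, hence equal to $0$, when no admissible sequence exists).

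The decisive step is then a degree count. By Definition~\ref{updown}, whenever $\uu_{i}(X)\neq 0$ one has $d^{+}_{i}=d^{+}$, that is $deg^{+}[\mm_{i}(X)]=(n-i)\,d^{+}$, and $\uu_{i}(X)$ is the single monomial of $\mm_{i}(X)$ of that degree; so $\uu_{i}(X)$ is a monomial of degree $(n-i)\,d^{+}$. Therefore each product $\prod_{s=1}^{r}\uu_{i_{s}}(X)$ occurring in the sum above is a monomial of degree $\sum_{s=1}^{r}(n-i_{s})\,d^{+}=t\,d^{+}$, and so $\uu^{(t)}(X)$ is a sum of monomials all of the same degree $t\,d^{+}$. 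Collecting them gives $\uu^{(t)}(X)=c\,X^{t d^{+}}$ for some $c\in\Z_{p^{k}}$, which is either a monomial of degree $t\,d^{+}$ (if $c\neq 0$) or the zero polynomial (if $c=0$). The statement for $\ll^{(t)}(X)$ follows by the mirror argument, working with $G_{\LL}$, $\ll_{i}(X)$, $deg^{-}$, $d^{-}_{i}$ and $d^{-}$ in place of their $+/\U$ counterparts, using that $\ll_{i}(X)\neq 0$ forces $deg^{-}[\mm_{i}(X)]=(n-i)\,d^{-}$ with $d^{-}\leq 0$, so that every admissible product is a monomial of degree $t\,d^{-}$.

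I do not expect a genuine obstacle: the whole argument is bookkeeping once the walk picture is set up. The two points deserving a line of care are (i) the uniqueness of the excursion decomposition, which is immediate from the fact that in $G_{\U}$ every vertex other than $n$ has out-degree one; and (ii) the degenerate case $d^{+}=0$ (or $d^{-}=0$), where the monomials $\uu_{i}(X)$ have degree $0$ — but then every product still has degree $t\cdot 0=0$, so the conclusion is unaffected. The hypothesis $\M(X)=\widehat{\NN}(X)$ is used here only to keep $\M(X)$, and hence $\U(X),\LL(X),\uu^{(t)}(X),\ll^{(t)}(X)$, well defined in Frobenius normal form; it will matter more substantially for the finer analysis of these quantities modulo $p$ carried out afterwards via Lemmas~\ref{piup} and~\ref{fiboext}.
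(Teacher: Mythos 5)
Your proof is correct and takes essentially the same route as the paper's: both decompose each closed walk of length $t$ at vertex $n$ in $G_{\U}$ into simple cycles (your ``excursions''), observe that the cycle leaving $n$ along the edge labelled $\uu_i(X)$ has length $n-i$ and weight a monomial of degree $(n-i)d^+$, and sum lengths to conclude that every contributing walk yields a monomial of degree $td^+$, so the entry is $0$ or a monomial of that degree. The differences are cosmetic (indexing of the simple cycles), plus your explicit remarks on uniqueness of the decomposition and on the degenerate case $d^+=0$, which the paper leaves implicit.
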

\begin{proof} 
We show that the statement is true for $\uu^{(t)}(X)$ (the proof concerning $\ll^{(t)}(X)$ is identical by replacing $d^+$, $\U(X)$ and related elements with $d^-$, $\LL(X)$ and related elements). For  each $i\in V_{\U}$, let $\gamma_i$ be the simple cycle of $G_{\U}$ from $n$ to $n$ and passing  through the edge $(n,i)$. Clearly, $\gamma_i$ is the path $n\to i\to i+1\ldots\to n-1\to n$ (with $\gamma_n$ the self-loop $n\to n$) of length $n-i+1$ and its weight is the monomial $\uu_{i-1}(X)$ of degree $(n-i+1)d^+$. We know that $\uu^{(t)}(X)$ is the sum of the weights of all cycles of length $t$ starting from $n$ and ending to $n$ in $G_{\U}$ if at least one of such cycles exists, $0$, otherwise. In the former case, each of these cycles can be  decomposed in a certain number $s\geq 1$ of simple cycles $\gamma_{j_1}^1,\ldots, \gamma_{j_s}^s$ of lengths giving sum $t$, \ie,  such that $\sum_{i=1}^{s} (n-j_i+1) = t$. Therefore, $(\U^t(X))_{n}^n$ is a monomial of degree $\sum_{i=1}^{s} (n-j_i+1)d^+ = t d^+$.\qed
\end{proof}
\begin{lemma}\label{tdim}
Let $\M(X)\in\mat{n}{\LPK}$
be any matrix in Frobenius normal form.
For every integer $t\geq 1$ both the following recurrences hold 
\begin{eqnarray}\label{eqgen1}
\uu^{(t)}(X) &=&  
\uu_{n-1}(X)\uu^{(t-1)}(X)+ \cdots + \uu_{1}(X)\uu^{(t-n+1)}(X)+  \uu_0(X)\uu^{(t-n)}(X)\\
\ll^{(t)}(X) &=&  
\ll_{n-1}(X)\ll^{(t-1)}(X)+ \cdots + \ll_{1}(X)\ll^{(t-n+1)}(X)+  \ll_0(X)\ll^{(t-n)}(X)
\end{eqnarray}
with the related initial conditions 
\begin{eqnarray}
\uu^{(0)}(X) &=& 1, \quad \ll^{(0)}(X) = 1\label{eqgen2a} \\ 
\uu^{(l)}(X) &=&0, \quad \ll^{(l)}(X) =0\quad \text{for $l<0$.} 
\label{eqgen2b}
\end{eqnarray}
\end{lemma}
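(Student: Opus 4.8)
The plan is to observe that $\U(X)$ and $\LL(X)$ are themselves companion (Frobenius) matrices and then let each of them satisfy its own characteristic recurrence. By Definition~\ref{updown}, $\U(X)$ (resp.\ $\LL(X)$) is in Frobenius normal form with $n$-th row $\myvec{\uu}(X)=(\uu_0(X),\ldots,\uu_{n-1}(X))$ (resp.\ $\myvec{\ll}(X)=(\ll_0(X),\ldots,\ll_{n-1}(X))$). I will prove the recurrence for $\uu^{(t)}(X)$; the one for $\ll^{(t)}(X)$ follows identically by replacing $\U(X)$ and the $\uu_i(X)$ with $\LL(X)$ and the $\ll_i(X)$.

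For exponents $t\ge n$ I would apply the Cayley--Hamilton identity~\eqref{cayleyeq} to $\U(X)$ in place of $\M(X)$ --- this is legitimate since $\LPK$ is a commutative ring --- obtaining $\U^n(X)=\uu_{n-1}(X)\U^{n-1}(X)+\cdots+\uu_1(X)\U(X)+\uu_0(X)I$; multiplying both sides by $\U^{t-n}(X)$ and reading off the $(n,n)$-entry gives~\eqref{eqgen1} for every $t\ge n$. The finitely many remaining exponents $1\le t\le n-1$, together with the initial conditions, are most cleanly handled --- in fact for all $t\ge1$ uniformly --- through the graph $G_{\U}$ of Definition~\ref{CH}. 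Recall that $\uu^{(t)}(X)=(\U^t(X))^n_n$ is the sum of the weights of the length-$t$ closed walks at vertex $n$. The only edges leaving $n$ are $n\to j$ with label $\uu_{j-1}(X)$ for $j=1,\ldots,n$ (where $n\to n$ is the self-loop), and from every vertex $j<n$ the unique outgoing edge is $j\to j+1$ with label $1$; consequently such a closed walk necessarily begins with a forced simple cycle $n\to j\to j+1\to\cdots\to n$ of length $n-j+1$ and weight $\uu_{j-1}(X)$, followed by an arbitrary closed walk of length $t-(n-j+1)$ at $n$. Summing over $j$ and reindexing with $i=j-1$ produces precisely $\uu^{(t)}(X)=\sum_{i=0}^{n-1}\uu_i(X)\uu^{(t-n+i)}(X)$, while the two boundary situations --- a first simple cycle of length exactly $t$, contributing $\uu_i(X)$ (consistent with $\uu^{(0)}(X)=1$), and one of length greater than $t$, contributing nothing (consistent with $\uu^{(l)}(X)=0$ for $l<0$) --- yield exactly the initial conditions~\eqref{eqgen2a}--\eqref{eqgen2b}.

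I do not expect any real difficulty: the statement merely records that the companion matrices $\U(X)$ and $\LL(X)$ obey their own characteristic recurrence. The only point needing mild care is the bookkeeping for the small exponents $t<n$ and the boundary values, which the walk-counting decomposition settles uniformly; alternatively one could check the cases $t=1,\ldots,n-1$ by directly computing the last rows of $\U(X),\U^2(X),\ldots$ from the companion structure (each multiplication by $\U(X)$ shifts the previous last row and adds a suitable multiple of the current last row), which reproduces the same recurrence.
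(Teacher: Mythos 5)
Your proposal is correct and follows essentially the same route as the paper: Cayley--Hamilton applied to the companion matrix $\U(X)$ (resp.\ $\LL(X)$) gives the recurrence for $t\ge n$, and the structure of the graph $G_{\U}$ together with the interpretation of $\U^t(X)^n_n$ as a sum over closed walks settles the cases $t=1,\ldots,n-1$ and the initial conditions. In fact your walk-decomposition argument spells out in more detail what the paper's proof only sketches.
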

\begin{proof}
We show the recurrence involving $\uu^{(t)}(X)$ (the proof for $\ll^{(t)}(X)$ is identical by replacing $\U(X)$ and its elements with $\LL(X)$ and its elements). Since $\U(X)$ is in Frobenius normal form too, by~\eqref{cayleyeq}, Recurrence~\eqref{eqgen1} holds for every $t\geq n$. It is clear that $\uu^{(0)}(X)=1$. Furthermore, by the structure of the graph $G_{\U}$ and the meaning of $\U(X)^n_{n}$, Equation~\eqref{eqgen1} is true under the initial conditions~\eqref{eqgen2a} and~\eqref{eqgen2b} for each $t=1, \ldots, n-1$.
\qed
\end{proof}

\begin{lemma}\label{infinitim}
Let $\M(X)\in \mat{n}{\LPK}$ be a matrix such that $\M(X)=\widehat{\NN}(X)$ for some matrix $\NN(X) \in \mat{n}{\LPK}$ in Frobenius normal form. Let $\upsilon(t)$ (resp., $\lambda(t)$) be the coefficient of $\uu^{(t)}(X)$ (resp., $\ll^{(t)}(X)$). It holds that $\gcd[\upsilon(t),p]=1$ (resp., $\gcd[\upsilon(t),p]=1$),  for infinitely many $t\in\N$. 
\\
In particular, if the value  $d^+$ (resp.,  $d^-$) associated with $\M(X)$ is non null, then for infinitely many $t\in\N$ both $\modulo{\uu^{(t)}(X)}{p^k}\neq 0$ and $deg(\modulo{\uu^{(t)}(X)}{p^k})\neq 0$ (resp., $\modulo{\ll^{(t)}(X)}{p^k}\neq 0$ and $deg(\modulo{\ll^{(t)}(X)}{p^k})\neq 0$) hold. 
In other terms, if $d^+>0$ (resp.,  $d^-<0$) then $|\{\uu^{(t)}(X), t\geq 1\}|=
\infty$ (resp., $|\{\ll^{t}(X), t\geq 1\}|
=\infty$).
\end{lemma}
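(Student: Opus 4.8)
The plan is to reduce the statement about coefficients of $\uu^{(t)}(X)$ to the combinatorial Lemma~\ref{fiboext}. First I would recall from Lemma~\ref{sempremonomio} that, since $\M(X)=\widehat{\NN}(X)$ has only monomial entries with coefficients coprime to $p$ in its $n$-th row, each $\uu^{(t)}(X)$ is either null or a single monomial of degree exactly $td^+$; hence knowing its coefficient $\upsilon(t)$ modulo $p$ is enough to control both $\modulo{\uu^{(t)}(X)}{p^k}\neq 0$ and $\deg(\modulo{\uu^{(t)}(X)}{p^k})\neq 0$ (the latter needs $d^+>0$, which is exactly the extra hypothesis in the ``in particular'' clause). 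So the whole content is: the coefficient sequence $\upsilon(t)$ is nonzero mod $p$ for infinitely many $t$.

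Next I would extract the recurrence for $\upsilon(t)$. By Lemma~\ref{tdim}, $\uu^{(t)}(X)=\uu_{n-1}(X)\uu^{(t-1)}(X)+\cdots+\uu_0(X)\uu^{(t-n)}(X)$ with $\uu^{(0)}(X)=1$ and $\uu^{(l)}(X)=0$ for $l<0$. Each nonzero $\uu_j(X)$ is a monomial whose coefficient $\alpha$ satisfies $1\le\alpha<p^k$ and $\gcd(\alpha,p)=1$, by the definition of $\U(X)$ (Definition~\ref{updown}) together with the hypothesis $\M(X)=\widehat{\NN}(X)$. Since by Lemma~\ref{sempremonomio} all terms appearing on the right-hand side at a given step have the same degree $td^+$ (those that are nonzero), comparing coefficients turns the polynomial recurrence into a scalar recurrence over $\Z_{p^k}$ of exactly the shape~\eqref{eq1}: $\upsilon(t)=\alpha_{j_1}\upsilon(t-l_1)+\cdots+\alpha_{j_h}\upsilon(t-l_h)$, where the $l_i$ are the lengths $n-j+1$ of the simple cycles through the nonzero edges $(n,j)$ and the $\alpha_{j_i}$ are the corresponding coefficients, all coprime to $p$ and in $[1,p^k)$; the initial data $\upsilon(0)=1$, $\upsilon(l)=0$ for $l<0$ match those of Lemma~\ref{fiboext}. (One should note that at least one $\uu_j(X)$ is nonzero, so $h\ge 1$ and the recurrence is non-degenerate; this is guaranteed whenever the row $\myvec{\mm}(X)$ is itself nonzero, and otherwise the LCA is trivial.) Applying Lemma~\ref{fiboext} gives $\modulo{\upsilon(t)}{p}\neq 0$ for infinitely many $t$, i.e. $\gcd(\upsilon(t),p)=1$ infinitely often. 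The statement for $\ll^{(t)}(X)$ is obtained by the same argument with $d^+,\U(X)$ replaced by $d^-,\LL(X)$.

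Finally I would derive the two corollaries. For infinitely many $t$ we have $\gcd(\upsilon(t),p)=1$, and then Lemma~\ref{piup} (with $a=\upsilon(t)$, $b$ arising from reducing $\upsilon(t)$ mod $p^k$) shows $\modulo{\uu^{(t)}(X)}{p^k}\neq 0$; if moreover $d^+>0$ then the monomial $\uu^{(t)}(X)$ has degree $td^+>0$, so $\deg(\modulo{\uu^{(t)}(X)}{p^k})=td^+\neq 0$; since these degrees grow with $t$, the set $\{\uu^{(t)}(X):t\ge 1\}$ is infinite. The main obstacle I anticipate is the bookkeeping in the second step: one must be careful that the polynomial recurrence really does collapse to a clean scalar recurrence of the form required by Lemma~\ref{fiboext} — in particular that the degrees of all nonzero summands agree at each step (which is where Lemma~\ref{sempremonomio} is essential), that the exponents/lengths $l_i$ are distinct positive integers, and that every coefficient $\alpha_i$ lies in the admissible range $[1,p^k)$ with $\gcd(\alpha_i,p)=1$ rather than merely being coprime to $p$ as an integer.
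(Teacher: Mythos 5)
Your proposal is correct and follows essentially the same route as the paper: both reduce the polynomial recurrence of Lemma~\ref{tdim} to a scalar recurrence over $\Z_{p^k}$ for the coefficients $\upsilon(t)$ (the paper by substituting $X=1$ into $\U(X)$, you by comparing coefficients of the unique monomial of degree $td^+$ guaranteed by Lemma~\ref{sempremonomio} --- the same thing here), and then invoke Lemma~\ref{fiboext} to get $\gcd(\upsilon(t),p)=1$ infinitely often, concluding via $d^+>0$ that the degrees $td^+$ diverge. Your extra care about non-degeneracy of the recurrence and the admissible range of the coefficients is sound and only makes explicit what the paper leaves implicit.
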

\begin{proof}
We show the statements concerning $\upsilon(t)$, $\U(X)$, $\uu^{(t)}(X)$, and $d^+$. 
Replace $X$ by $1$ in the matrix $\U(X)$. Now, the coefficient $\upsilon(t)$ is just the element of position $(n,n)$ in the $t$-th power of the obtained matrix $\U(1)$. Over $\U(1)$, the thesis of Lemma~\ref{tdim} is still valid replacing $\uu^{(t)}(X)$ by $\upsilon(t)$. Thus, for every $t\in\N$
\begin{eqnarray*}
\upsilon(t) &=&  \uu_{n-1}(1)\upsilon(t-1)+ \cdots +  \uu_1(1) \upsilon(t-n+1)+\uu_0(1) \upsilon(t-n)
\end{eqnarray*}
with initial conditions
\begin{eqnarray*}\label{eqgen3}
\upsilon(0)&=& 1 \\
\upsilon(l)&=&0\quad \text{for $l<0$,}
\end{eqnarray*}
where each $\uu_i(1)$ is the coefficient of the monomial $\uu_i(X)$ inside $\U(X)$. Thus, it follows that 
\begin{equation*}\label{eq55}
\modulo{\upsilon(t)}{p^k} = 
\modulo{\uu_{n-1}(1)\upsilon(t-1)+ \cdots +  \uu_1(1) \upsilon(t-n+1)+\uu_0(1) \upsilon(t-n)}{p^k}
\end{equation*}
By Lemma~\ref{fiboext} we obtain that $\gcd[\upsilon(t),p]=1$ (and so $\modulo{\upsilon{(t)}}{p^k}\neq 0$, too) for infinitely many $t\in\N$. In particular, if the value  $d^+$ associated with $\M(X)$ is non null, then, by the structure of $G_{\U}$ and Lemma~\ref{sempremonomio},  both $\modulo{\uu^{(t)}(X)}{p^k}\neq 0$ and $deg(\modulo{\uu^{(t)}(X)}{p^k})\neq 0$ hold
for infinitely many $t\in\N$, too. Therefore, $|\{\uu^{(t)}(X), t\geq 1\}|=
\infty$. The same proof runs for the statements involving $\lambda(t)$, $\LL(X)$, $\uu^{(t)}(X)$, and $d^-$ provided that these replace $\upsilon(t)$, $\U(X)$, $\uu^{(t)}(X)$, and $d^+$, respectively.
\qed
\end{proof}
The following Lemma puts in relation the behavior of $\uu^{(t)}(X)$ or $\ll^{(t)}(X)$ with that of $\widehat{\M}^{\, t}(X)_n^n$, 
for any matrix $\M(X)\in \mat{n}{\LPK}$  in Frobenius normal form. 
\begin{lemma}\label{corollario}
Let $\M(X)\in \mat{n}{\LPK}$ be a matrix in Frobenius normal form. 
If either $|\{\uu^{(t)}(X), t\geq 1\}|=\infty$ or $|\{\ll^{(t)}(X), t\geq 1\}|=\infty$ 
then $|\{\widehat{\M}^{\, t}(X)_n^n, t\geq 1\}|=\infty$.
\end{lemma}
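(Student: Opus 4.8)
The plan is to show that, under the hypothesis, the Laurent degree of $\widehat{\M}^{\,t}(X)_n^n$ is unbounded as $t$ ranges over $\N$; since Laurent polynomials of distinct degrees are distinct, this at once gives $|\{\widehat{\M}^{\,t}(X)_n^n, t\geq 1\}|=\infty$. By symmetry (exchanging $\U(X)$, $\uu^{(t)}(X)$, $d^+$, $deg^+$ and ``maximal degree'' with $\LL(X)$, $\ll^{(t)}(X)$, $d^-$, $deg^-$ and ``minimal degree'' throughout) I may assume $|\{\uu^{(t)}(X), t\geq 1\}|=\infty$. First I would record two facts. (a) Since $deg^+[\mm_i(X)]\geq 0$ and $n-i\geq 1$ for each $i$, we have $d^+\geq 0$, and if $d^+=0$ then by Lemma~\ref{sempremonomio} every $\uu^{(t)}(X)$ would be a constant in $\Z_{p^k}$, contradicting the hypothesis; hence $d^+>0$. (b) Since $\widehat{\M}(X)$ agrees with $\M(X)$ except on its $n$-th row, which becomes $\widehat{\myvec{\mm}}(X)$, it is again in Frobenius normal form; so, exactly as in the proof of Lemma~\ref{tdim} (Cayley--Hamilton~\eqref{cayleyeq} applied to $\widehat{\M}(X)$, together with an inspection of $G_{\widehat{\M}}$ for $t<n$), the sequence $g^{(t)}(X):=\widehat{\M}^{\,t}(X)_n^n$ satisfies, for every $t\geq 1$,
\[
g^{(t)}(X)=\widehat{\mm}_{n-1}(X)g^{(t-1)}(X)+\cdots+\widehat{\mm}_{1}(X)g^{(t-n+1)}(X)+\widehat{\mm}_{0}(X)g^{(t-n)}(X),
\]
with $g^{(0)}(X)=1$ and $g^{(l)}(X)=0$ for $l<0$ --- the same recurrence and initial data as those satisfied by $\uu^{(t)}(X)$ with $\uu_i(X)$ in place of $\widehat{\mm}_i(X)$.

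The core of the argument is the following \emph{claim}, which I would prove by induction on $t\geq 0$, running the two recurrences in parallel: every monomial occurring in $g^{(t)}(X)$ has degree at most $td^+$, and the homogeneous component of $g^{(t)}(X)$ of degree exactly $td^+$ equals $\uu^{(t)}(X)$. For the inductive step one uses that every monomial of $\widehat{\mm}_i(X)$ has degree at most $d_i^+(n-i)\leq d^+(n-i)$ (immediate from the definitions of $deg^+$ and $d_i^+$, according to whether $\widehat{\mm}_i(X)$ has a positive-degree monomial or not); hence in the term $\widehat{\mm}_i(X)g^{(t-(n-i))}(X)$ every monomial has degree at most $d^+(n-i)+(t-(n-i))d^+=td^+$, and the value $td^+$ is attained only as the product of the leading monomial of $\widehat{\mm}_i(X)$ --- which is $\uu_i(X)$ when $d_i^+=d^+$ and contributes $0$ otherwise --- with the degree-$(t-(n-i))d^+$ component of $g^{(t-(n-i))}(X)$, equal to $\uu^{(t-(n-i))}(X)$ by the inductive hypothesis. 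Summing over $i$ and invoking the recurrence for $\uu^{(t)}(X)$ yields the claim.

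Finally, by (a) together with Lemma~\ref{infinitim}, there are infinitely many $t\in\N$ for which $\uu^{(t)}(X)$ is a monomial of degree $td^+$ with coefficient coprime to $p$, so that $\modulo{\uu^{(t)}(X)}{p^k}\neq 0$. For each such $t$ the claim shows that $\widehat{\M}^{\,t}(X)_n^n$ has no monomial of degree exceeding $td^+$ while its monomial of degree $td^+$ is nonzero; hence the Laurent degree of $\widehat{\M}^{\,t}(X)_n^n$ is exactly $td^+$. Since $d^+>0$, these degrees are pairwise distinct and unbounded, whence $|\{\widehat{\M}^{\,t}(X)_n^n, t\geq 1\}|=\infty$.

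I expect the main obstacle to be the induction \emph{claim}: what requires care is ruling out ``accidental'' top-degree monomials in $g^{(t)}(X)$ produced by combinations of non-leading monomials of the $\widehat{\mm}_i(X)$, which is precisely where $d^+\geq 0$ together with the positivity of the lengths $n-i$ of the simple cycles through $n$ is used. The remaining pieces --- the reduction to a single surviving monomial via Lemmas~\ref{sempremonomio} and~\ref{infinitim}, and the passage from ``degree unbounded'' to ``infinitely many distinct values'' --- are routine.
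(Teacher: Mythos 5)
Your proof is correct and follows essentially the same route as the paper: the key point in both is that the homogeneous component of degree $td^+$ (resp.\ $td^-$) of $\widehat{\M}^{\,t}(X)_n^n$ is exactly $\uu^{(t)}(X)$ (resp.\ $\ll^{(t)}(X)$), which by Lemma~\ref{infinitim} is nonzero of unbounded degree for infinitely many $t$. The paper establishes this by observing that $G_{\U}$ is a subgraph of $G_{\widehat{\M}}$ and that only cycles lying in $G_{\U}$ can contribute weight of degree $td^+$, whereas you derive the same fact by induction on the shared linear recurrence of Lemma~\ref{tdim}; your version is a more explicit rendering of the same argument and carefully handles the point (degree bookkeeping via $d^+\geq d_i^+\geq 0$ and $d^+>0$) that the paper's one-line justification glosses over.
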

\begin{proof}
Assume that $|\{\uu^{(t)}(X), t\geq 1\}|=\infty$. Since $G_{\U}$ is a subgraph of $G_{\widehat{\M}}$ (with different labels), for each integer $t$ from Lemma~\ref{infinitim} applied to $\widehat{\M}(X)$, the cycles of length $t$ in $G_{\widehat{\M}}$ with weight containing a monomial of degree $td^+$ are exactly the cycles of length $t$ in $G_{\U}$. Therefore, it follows that $|\{\widehat{\M}^{\, t}(X)_n^n, t\geq 1\}|=\infty$. The same argument on $G_{\LL}$ and involving $d^-$ allows to prove the thesis if $|\{\ll^{(t)}(X), t\geq 1\}|=\infty$.
\end{proof}
We are now able to present and prove the main result of this section. It shows a decidable characterization of sensitivity for Frobenius LCA over $\Z_{p^k}^n$.
\begin{lemma}\label{frobeniusSensitivity}
Let $\left ( (\Z_{p^k}^n)^{\Z}, F \right)$ be any Frobenius LCA over $\Z_{p^k}^n$ and let $(\mm_0(X), \ldots, \mm_{n-1})$ be the $n$-th row of the matrix $\M(X)\in\mat{n}{\LPK}$ in Frobenius normal form associated with $F$. Then, $F$ is sensitive to the initial conditions if and only if $\mm_i(X)$ is sensitive for some $i\in[0,n-1]$.
\end{lemma}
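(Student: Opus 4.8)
The plan is to prove the two implications separately, in both cases reducing via Proposition~\ref{prop:dich} to the question of whether $\{\M(X)^t : t\geq 1\}$ is infinite; since the entries of these matrices are Laurent polynomials over the \emph{finite} ring $\Z_{p^k}$, that set is infinite precisely when the degrees occurring in the entries of $\M^t(X)$ are unbounded as $t$ grows. The common tool is the decomposition $\M(X)=\widehat{\M}(X)+p\,\M'(X)$ for some $\M'(X)\in\mat{n}{\LPK}$ (the observation following Definition~\ref{partevera}), from which $\M^t(X)=\widehat{\M}^{\,t}(X)+p\,R_t(X)$ for suitable $R_t(X)\in\mat{n}{\LPK}$, because every length-$t$ word in the two ``letters'' $\widehat{\M}(X)$ and $p\,\M'(X)$ other than $\widehat{\M}^{\,t}(X)$ is a multiple of $p$.

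For the ``only if'' direction I argue by contraposition. If no $\mm_i(X)$ is sensitive, then $deg^+[\mm_i(X)]=deg^-[\mm_i(X)]=0$ for every $i$, so each $\widehat{\mm}_i(X)$ consists of at most its degree-$0$ term and hence $\widehat{\M}(X)$ is a matrix with no occurrence of $X$. Then in $\M^t(X)=\bigl(\widehat{\M}(X)+p\,\M'(X)\bigr)^t$ every summand using $k$ or more copies of $p\,\M'(X)$ vanishes, since $p^k=0$ in $\Z_{p^k}$, and each surviving summand (using $j\leq k-1$ copies of $p\,\M'(X)$ and constant factors otherwise) has entries whose monomials have degree in the fixed interval $[-(k-1)D',(k-1)D']$, where $D'$ bounds the degree spread of $\M'(X)$. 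Thus all the matrices $\M^t(X)$ lie in a fixed finite set, $\{\M(X)^t:t\geq 1\}$ is finite, and $F$ is not sensitive.

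For the ``if'' direction, assume some $\mm_i(X)$ is sensitive; up to the evident symmetry (replacing $\U(X),d^+,\uu^{(t)}(X)$ by $\LL(X),d^-,\ll^{(t)}(X)$ and degrees by their negatives) assume $deg^+[\mm_i(X)]>0$, so that $d^+\geq d_i^+>0$. Applying Lemma~\ref{infinitim} to $\widehat{\M}(X)$ gives infinitely many $t$ with $\gcd(\upsilon(t),p)=1$, and for each such $t$ Lemma~\ref{sempremonomio} yields $\uu^{(t)}(X)=\upsilon(t)\,X^{td^+}$. The key step is to isolate the coefficient of $X^{td^+}$ in $\widehat{\M}^{\,t}(X)^n_n$: using the walk interpretation of matrix powers (Definition~\ref{CH}), every closed walk of length $t$ at vertex $n$ in $G_{\widehat{\M}}$ has weight of degree at most $td^+$, and it attains degree exactly $td^+$ only when it is assembled purely from the ``top'' simple cycles, that is, the cycles surviving in $G_{\U}$, so that coefficient equals exactly $\upsilon(t)$ (this is the computation underlying Lemma~\ref{corollario}). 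Consequently the coefficient of $X^{td^+}$ in $\M^t(X)^n_n=\widehat{\M}^{\,t}(X)^n_n+p\,R_t(X)^n_n$ has the form $\upsilon(t)+p\,r_t$ and is therefore nonzero in $\Z_{p^k}$ by Lemma~\ref{piup}. Hence for infinitely many $t$ the entry $\M^t(X)^n_n$ carries a nonzero monomial of degree $td^+\to\infty$, so $\{\M(X)^t:t\geq 1\}$ is infinite and $F$ is sensitive.

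I expect the delicate point to be precisely this last transfer in the ``if'' direction: Lemmas~\ref{sempremonomio}--\ref{corollario} already show that $\widehat{\M}^{\,t}(X)$ grows, but the correction term $p\,R_t(X)$ could a priori reintroduce cancellations modulo $p^k$; what rescues the argument is that the coefficient of the extreme monomial $X^{td^+}$ is exactly $\upsilon(t)$ with $\gcd(\upsilon(t),p)=1$, so that adding any multiple of $p$ cannot annihilate it (Lemma~\ref{piup}). One also has to check carefully that in $G_{\widehat{\M}}$ no walk using a non-top simple cycle can reach degree $td^+$, which is exactly where the normalisation $d_i^+=deg^+[\mm_i(X)]/(n-i)$ and the choice $d^+=\max_i d_i^+$ are used.
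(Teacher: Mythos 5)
Your proof is correct and follows essentially the same route as the paper's: the same decomposition $\M(X)=\widehat{\M}(X)+p\,\M'(X)$, the same reduction to finiteness of $\{\M^t(X): t\geq 1\}$ via Proposition~\ref{prop:dich}, and the same chain Lemma~\ref{infinitim}--Lemma~\ref{corollario}--Lemma~\ref{piup} for the converse implication. If anything, you are more explicit than the paper at exactly the two delicate points you flag, namely the boundedness of degrees in the contrapositive direction and the identification of the coefficient of $X^{td^+}$ in $\M^t(X)^n_n$ as $\upsilon(t)+p\,r_t$.
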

\begin{proof}
Let us prove the two implications separately.
\\
Assume that all $\mm_i(X)$ are not sensitive. Then, $\widehat{\M}(X)\in \mat{n}{\Z_{p^k}}$, \ie, it does not contain the formal variable $X$, and $\M(X)=\widehat{\M}(X)+p\M'(X)$, for some $\M'(X)\in\mat{n}{\LPK}$ in Frobenius normal form. Therefore, for any integer $t>0$, $\M^t(X)$ is the sum of terms, each of them consisting of a product in which $p^j$ appears as factor, for some natural $j$ depending on $t$ and on the specific term which $p^j$ belongs to. Since every element of $\M^t(X)$ is taken modulo $p^k$, for any natural $t>0$ it holds that in each term of such a sum $p^j$ appears with $j\in[0, k-1]$ (we stress that $j$ may depend on $t$ and on the specific term of the sum, but it is always bounded by $k$). Therefore, $|\{\M^i(X): i>0\}|<\infty$ and so, by Proposition~\ref{prop:dich}, $F$ is not sensitive to the initial conditions.

Conversely, suppose that $\mm_i(X)$ is sensitive for some $i\in[0,n-1]$ and $d^+>0$ (the case $d^-<0$ is identical). By Definition~\ref{partevera}, for any natural $t>0$ there exists a matrix $\M'(X)\in \mat{n}{\LPK}$ such that $\M^t(X)=\widehat{\M}^t(X) + p\M'(X)$. 
By a combination of Lemmata~\ref{infinitim} and~\ref{corollario}, we get $|\{\widehat{\M}^{\, t}(X)_n^n, t\geq 1\}|=\infty$ and so, by  Lemma~\ref{piup}, $|\{\M^t(X)_n^n, t\geq 1\}|=\infty$ too. Therefore, it follows that $|\{\M^t(X), t\geq 1\}|=\infty$ and, by Proposition~\ref{prop:dich}, we conclude that $F$ is sensitive to the initial conditions.
\qed
\end{proof}

\section{Perspectives}

\label{sec:periodic-non-uniform}
In this paper we have studied equicontinuity and sensitivity to the initial conditions for linear HOCA over $\Z_m$ of memory size $n$, providing decidable characterizations for these properties. 
Such characterizations extend the ones shown in~\cite{ManziniM99} for linear cellular automata (LCA) over $\Z_m^n$ in the case $n=1$ and have an impact in many applications, for instance those concerning the design of secret sharing schemes~\cite{MARTINDELREY20051356,chai2005}, data compression and image processing~\cite{Gu2000}, where linear HOCA are involved and a chaotic behavior is required.  
We also proved that linear HOCA over $\Z_m$ of memory size $n$ form a class that is indistinguishable from a subclass of LCA (namely, the subclass of Frobenius LCA) over $\Z_m^n$. This enables to decide injectivity and surjectivity for linear HOCA over $\Z_m$ of memory size $n$ by means of the decidable characterizations of injectivity and surjectivity provided  in~\cite{bruyn1991} and~\cite{kari2000} for LCA over $\Z^n_m$. A natural and pretty interesting research direction consists of investigating other chaotic properties as transitivity and expansivity for linear HOCA. Possible characterizations of the latter properties would be useful in cryptographic applications since they would allow to design schemes that make attacks even more difficult. Furthermore, all the mentioned dynamical properties, including sensitivity and equicontinuity, could be studied for the whole class of LCA over $\Z_m^n$ (more difficult task).  Besides leading to a generalization of the shown results, such investigations would indeed help to understand also the behavior of other models used in applications, as for instance linear \emph{non-uniform cellular automata} over $\Z_m$. Let us explain precisely the meaning of such a sentence. 

First of all, recall that a \emph{non-uniform cellular automaton} ($\nu$-CA) over the alphabet $S$ is a structure $\structure{S, \{f_j, r_j\}_{j\in\Z}}$ defined by a family of local rules $f_j: S^{2r_j+1}\to S$, each of them of its own radius $r_j$. Similarly to CA, the global rule of a $\nu$-CA is the map $F_{\nu}:\az\to\az$ defined as
\begin{equation*}
\forall c\in \az, \quad \forall i\in\Z,\qquad F_{\nu}(c)_i=f_i(c_{i-r}, \ldots c_{i+r})\enspace,
\end{equation*}
and, so, $(\az, F_{\nu})$ is the DDS associated with a given $\nu$-CA. A $\nu$-CA over the alphabet $S=\Z_m$ is linear if all its local rules are linear.  We are going to focus our attention on an interesting class of (linear) $\nu$-CA, namely, the (linear) periodic ones. A \emph{periodic $\nu$-CA ($\pi\nu$-CA)} is a $\nu$-CA satisfying the following condition: there exists an integer $n>0$ (called the structural period) such that $f_j=f_{j\mod n}$ for any $j\in\Z$. It is clear that a $\pi\nu$-CA of structural period $n$ is defined by the local rules $f_0$, \ldots, $f_{n-1}$, which, without loss of generality, can be all assumed to have a same radius $r$. 
The following result holds.
\begin{proposition}
Every linear $\pi\nu$-CA over $\Z_m$ of structural period $n$ is topologically conjugated to a LCA over $\Z^n_m$, and vice versa. In other terms, up to a homeomorphism, the class of linear $\pi\nu$-CA over $\Z_m$ of structural period $n$ is nothing but the class of LCA over $\Z^n_m$.
\end{proposition}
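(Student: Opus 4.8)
The plan is to exhibit the standard \emph{blocking} homeomorphism that merges $n$ consecutive cells into a single supercell and to check that, in the linear periodic case, it conjugates the two models. Define $\phi\colon\Z_m^{\Z}\to(\Z_m^n)^{\Z}$ by $\phi(c)_k=(c_{kn},c_{kn+1},\ldots,c_{kn+n-1})$ for every $k\in\Z$. This map is clearly a bijection, and with respect to the usual Cantor metrics on $\Z_m^{\Z}$ and $(\Z_m^n)^{\Z}$ it changes distances only by a bounded factor (two configurations agree on the central block $[-kn,kn+n-1]$ iff their images agree on $[-k,k]$), hence it is a homeomorphism. The whole proposition will then reduce to showing that $\phi$ intertwines the relevant global rules in both directions.

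For the first inclusion, let a linear $\pi\nu$-CA over $\Z_m$ of structural period $n$ be given by linear local rules $f_0,\ldots,f_{n-1}$ which, without loss of generality, share a common radius $r$, so that $f_j(x_{-r},\ldots,x_r)=\modulo{\sum_{t=-r}^{r}a_t^{(j)}x_t}{m}$. I would compute the $j$-th component of $\phi(F_{\nu}(c))_k$, which is $F_{\nu}(c)_{kn+j}=f_j(c_{kn+j-r},\ldots,c_{kn+j+r})$: a fixed $\Z_m$-linear combination of the cells $c_{kn+j+t}$ with $t\in[-r,r]$. Each such cell is exactly the $s$-th entry of $\phi(c)_{k+l}$, where $l=\lfloor(j+t)/n\rfloor$ and $s=(j+t)\bmod n$, and $|l|\le R$ for some $R$ depending only on $r$ and $n$. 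Collecting these coefficients produces matrices $\M_{-R},\ldots,\M_R\in\mat{n}{\Z_m}$, built only from the $a_t^{(j)}$, such that $\phi(F_{\nu}(c))_k=\modulo{\sum_{l=-R}^{R}\M_l\,\phi(c)_{k+l}}{m}$ for all $c$ and $k$; that is, $\phi\circ F_{\nu}=G\circ\phi$ where $G$ is the LCA over $\Z_m^n$ defined by $\M_{-R},\ldots,\M_R$. Hence $F_{\nu}$ is topologically conjugated to $G$.

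For the converse, start from an LCA $G$ over $\Z_m^n$ given by $\M_{-R},\ldots,\M_R\in\mat{n}{\Z_m}$ and run the same bookkeeping backwards. Set $r:=(R+1)n-1$ and, for each $j\in\{0,\ldots,n-1\}$ and each $t\in[-r,r]$, define $a_t^{(j)}:=(\M_l)_{j,s}$ where $l=\lfloor(j+t)/n\rfloor$ and $s=(j+t)\bmod n$ whenever $|l|\le R$, and $a_t^{(j)}:=0$ otherwise; let $f_j$ be the corresponding linear local rule of radius $r$, and extend periodically by $f_i=f_{i\bmod n}$. The key elementary check is that $(l,s)\mapsto ln+s-j$ is a bijection from $\{-R,\ldots,R\}\times\{0,\ldots,n-1\}$ onto $[-r,r]$, so the $a_t^{(j)}$ are unambiguously defined and capture every entry of every $\M_l$. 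Repeating the computation of the previous paragraph in reverse then gives $\phi\circ F_{\nu}=G\circ\phi$ for the resulting linear $\pi\nu$-CA of structural period $n$, so $G$ is topologically conjugated to it. Combining the two inclusions with the fact that $\phi$ is a homeomorphism yields the statement.

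The conceptual content here is light: linearity and the period-$n$ structure are preserved by construction, the conjugacy identity is a direct expansion of definitions, and the homeomorphism property of $\phi$ is classical. I expect the only delicate point to be the index arithmetic — keeping straight which original cell becomes which component of which supercell, verifying that a common radius $r$ can always be chosen, and confirming there are no collisions (and no omissions) when reading the matrix entries off from, or into, the coefficients $a_t^{(j)}$.
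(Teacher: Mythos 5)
Your proof is correct and takes essentially the same route as the paper's: the block homeomorphism $\phi(c)_k=(c_{kn},\ldots,c_{kn+n-1})$ together with reading the matrices off the coefficients (your $(\M_l)_{j,s}=a^{(j)}_{ln+s-j}$ is exactly the paper's matrix $\M_\ell$ with entries $a_{j,\ell n - j + s}$), with the converse obtained by running the same bookkeeping backwards. One harmless nitpick: the map $(l,s)\mapsto ln+s-j$ is injective into $[-r,r]$ but not onto it when $n>1$ (its image has $(2R+1)n$ elements versus $2r+1=2(R+1)n-1$), which does not affect the argument since the unreached coefficients are set to $0$.
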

\begin{proof}
Consider any linear $\pi\nu$-CA of structural period $n$, radius $r$, and local rules $f_0$, \ldots, $f_{n-1}$, where, for each $j=0, \ldots, n-1$, the rule $f_j$ is defined as follows
$$
\forall (x_{-r},\ldots,x_r)\in \Z_m^{2r+1}, \quad f_j(x_{-r},\ldots,x_r) = \sum_{i=-r}^r a_{j,i} x_i\enspace.
$$
Set $s=-\lceil r / n \rceil$ and let $\left( ( \Z_m^n)^{\Z}, F \right )$ be the LCA over $\Z_m^n$ such that its local rule is defined by the matrices ${\M}_{-s}, \ldots,  {\M}_{0}, \ldots, {\M}_{s}\in\mat{n}{\Z_m}$, where, for each $\ell\in [-s,s]$, 
\begin{align*}
  {\M}_{\ell} =
  \begin{pmatrix}
    a_{0,\ell n} & a_{0,\ell n+1} & \cdots & a_{0,\ell n+n-1} \\
    a_{1,\ell n-1} & a_{1, \ell n} & \cdots & a_{1,\ell n+n-2} \\
    \vdots & \vdots & \ddots & \vdots \\
    a_{n-1,\ell n-n+1} & a_{n-1,\ell n-n+2} & \cdots & a_{n-1,\ell n}
  \end{pmatrix}\enspace,
\end{align*}
with~$a_{j,i} = 0$ whenever~$i \notin [-r,r]$. One gets that $\varphi\circ F_{\nu}=F\circ\varphi$, where 
 $\varphi:\Z_m^{\Z}\to \left( \Z_m^n \right)^{\Z}$ is the homeomorphism associating any configuration $c\in\Z_m^{\Z}$ with the element $\varphi(c)\in \left( \Z_m^n \right)^{\Z}$ such that for any $i\in\Z$ and for each $j=0, \ldots, n-1$, the $(j+1)$--th component of the vector $\psi(c)_i\in \Z_m^n$ is $\psi(c)^{j+1}_i=c_{in+j}$. Then, the linear $\pi\nu$-CA $F_{\nu}$ is topologically conjugated to the LCA $F$. Similarly, one proves that every LCA over $\Z_m^\pi$ is topologically conjugated to a linear $\pi\nu$-CA over $\Z_m$ of structural period $\pi$.
 \qed
\end{proof}
This result, the fact that little is known for linear $\nu$-CA, and that linear $\pi\nu$-CA are used in many applications (for instance, as pointed out in~\cite{kari2000}, they can be used as subband encoders for compressing signals and images~\cite{Shapiro93}) further motivate the investigation of LCA over $\Z_m^n$ in the next future. 
\section*{Acknowledgements}

Enrico Formenti acknowledges the partial support from the project PACA APEX FRI.
Alberto Dennunzio and Luca Manzoni were partially supported by Fondo d'Ateneo (FA)  2016 of Università degli Studi di Milano Bicocca: ``Sistemi Complessi e Incerti: teoria ed applicazioni''.
Antonio E. Porreca was partially supported by Fondo d'Ateneo (FA) 2015 of Università degli Studi di Milano Bicocca: ``Complessità computazionale e applicazioni crittografiche di modelli di calcolo bioispirati''.

\bibliographystyle{plain}
\bibliography{\jobname.bib}

\end{document}